\declaretheorem[name=Theorem,numberwithin=section]{theorem}
\declaretheorem[name=Lemma,numberwithin=section]{lemma}
\newcommand{\ang}[1]{\langle #1\rangle}
\newcommand{\RE}{\mathbb{R}}            
\newcommand{\XX}{\mathcal{X}}           
\newcommand{\eps}{\varepsilon}          
\newcommand{\inv}[1]{\frac{1}{#1}}
\newcommand{\Gradient}{\nabla}
\newcommand{\Hess}{\nabla^2}
\newcommand{\Transpose}{\intercal}
\DeclareMathOperator{\diam}{diam}
\DeclareMathOperator{\radius}{radius}
\DeclareMathOperator{\dist}{dist}
\newcommand{\etal}{\textit{et al.}}
\DeclareSymbolFont{YHlargesymbols}{OMX}{yhex}{m}{n}
\DeclareMathAccent{\conc}{\mathord}{YHlargesymbols}{"F3}
\begin{document}

\title{Approximate Nearest Neighbor Searching with\\ Non-Euclidean and Weighted Distances\thanks{A preliminary version of this paper appeared in Proc.\ 30th Annu.\ ACM-SIAM Sympos.\ Discrete Algorithms, 2019, 355--372.}}

\author{%
    Ahmed Abdelkader\thanks{Research supported by NSF grant CCF--1618866.}\\
		Department of Computer Science \\
		University of Maryland,
		College Park, Maryland 20742 \\
		akader@cs.umd.edu
		\and
	Sunil Arya\thanks{Research supported by the Research Grants Council of Hong Kong, China under project number 16213219.}\\
		Department of Computer Science and Engineering \\
		The Hong Kong University of Science and Technology,
		Hong Kong\\
		arya@cse.ust.hk \\
		\and
	Guilherme D. da Fonseca\thanks{Research supported by the European Research Council under ERC Grant Agreement number 339025 GUDHI (Algorithmic Foundations of Geometric Understanding in Higher Dimensions).}\\
		Universit\'{e} Clermont Auvergne, LIMOS, and INRIA Sophia Antipolis,
		France\\
		fonseca@isima.fr
		\and
	David M. Mount\footnotemark[1]\\
		Department of Computer Science and 
		Institute for Advanced Computer Studies \\
		University of Maryland,
		College Park, Maryland 20742 \\
		mount@umd.edu \\
}
\date{}

\maketitle

\begin{abstract}
We present a new approach to approximate nearest-neighbor queries in fixed dimension under a variety of non-Euclidean distances. We are given a set $S$ of $n$ points in $\mathbb{R}^d$, an approximation parameter $\varepsilon > 0$, and a distance function that satisfies certain smoothness and growth-rate assumptions. The objective is to preprocess $S$ into a data structure so that for any query point $q$ in $\mathbb{R}^d$, it is possible to efficiently report any point of $S$ whose distance from $q$ is within a factor of $1+\varepsilon$ of the actual closest point.

Prior to this work, the most efficient data structures for approximate nearest-neighbor searching in spaces of constant dimensionality applied only to the Euclidean metric. This paper overcomes this limitation through a method called convexification. For admissible distance functions, the proposed data structures answer queries in logarithmic time using $O(n \log (1 / \varepsilon) / \varepsilon^{d/2})$ space, nearly matching the best known bounds for the Euclidean metric. These results apply to both convex scaling distance functions (including the Mahalanobis distance and weighted Minkowski metrics) and Bregman divergences (including the Kullback-Leibler divergence and the Itakura-Saito distance).
\end{abstract}

\section{Introduction} \label{s:intro}

Nearest-neighbor searching is a fundamental retrieval problem with numerous applications in fields such as machine learning, data mining, data compression, and pattern recognition. A set of $n$ points, called \emph{sites}, is preprocessed into a data structure such that, given any query point $q$, it is possible to report the site that is closest to $q$. The most common formulation involves points in real $d$-dimensional space, $\RE^d$, under the Euclidean metric. Unfortunately, the best solution achieving $O(\log n)$ query time uses roughly $O(n^{d/2})$ storage space~\cite{Cla88}, which is too high for many applications.

This has motivated the study of approximations. Given an approximation parameter $\eps > 0$, \emph{$\eps$-approximate nearest-neighbor searching} ($\eps$-ANN) returns any site whose distance from $q$ is within a factor of $1+\eps$ of the distance to the true nearest neighbor. Throughout, we focus on $\RE^d$ for fixed $d$ and on data structures that achieve query time $O(\log\frac{n}{\eps})$. The objective is to produce data structures of linear storage while minimizing the dependencies on $\eps$, which typically grow exponentially with the dimension. 

Har-Peled showed that logarithmic query time could be achieved for Euclidean $\eps$-ANN queries using roughly $O(n/\eps^{d})$ space through the \emph{approximate Voronoi diagram} (AVD) data structure~\cite{Har01}. Despite subsequent work on the problem (see, e.g., \cite{AMM09a,AFM18a}), the storage requirements needed to achieve logarithmic query time remained essentially unchanged for over 15 years. Recently, Arya {\etal}~\cite{AFM17a,AFM17b} succeeded in reducing the storage to $O(n / \eps^{d/2})$ by applying techniques from convex approximation.%
\footnote{Chan~\cite{Cha18} presented a similar result by a very different approach, and it generalizes to some other distance functions, however the query time is not logarithmic.}
Unlike the simpler data structure of \cite{Har01}, which can be applied to a variety of metrics, this recent data structure exploits properties that are specific to Euclidean space, which significantly limits its applicability. In particular, it applies a reduction to approximate polytope membership~\cite{AFM18a} based on the well-known \emph{lifting transformation}~\cite{textbook}. However, this transformation applies only for the Euclidean distance. Note that all the aforementioned data structures rely on the triangle inequality. Therefore, they cannot generally be applied to situations where each site is associated with its own distance function as arises, for example, with multiplicatively weighted sites (defined below).

Har-Peled and Kumar introduced a powerful technique to overcome this limitation through the use of \emph{minimization diagrams} \cite{HaK15}. For each site $p_i$, let $f_i : \RE^d \rightarrow \RE^+$ be the associated distance function. Let $\mathcal{F}_{\min}$ denote the pointwise minimum of these functions, that is, the \emph{lower-envelope function}. Clearly, approximating the value of $\mathcal{F}_{\min}$ at a query point $q$ is equivalent to approximating the distance to $q$'s nearest neighbor.%
\footnote{The idea of using envelopes of functions for the purpose of nearest-neighbor searching has a long history, and it is central to the well-known relationship between the Euclidean Voronoi diagram of a set of points in $\RE^d$ and the lower envelope of a collection of hyperplanes in $\RE^{d+1}$ through the lifting transformation \cite{textbook}.}
Har-Peled and Kumar proved that $\eps$-ANN searching over a wide variety of distance functions (including additively and multiplicatively weighted sites) could be cast in this manner \cite{HaK15}. They formulated this problem in a very abstract setting, where no explicit reference is made to sites. Instead, the input is expressed in terms of abstract properties of the distance functions, such as their growth rates and ``sketchability.'' While this technique is very general, the complexity bounds are much worse than for the corresponding concrete versions. For example, in the case of Euclidean distance with multiplicative weights, in order to achieve logarithmic query time, the storage used is $O((n \log^{d+2} n) / \eps^{2d+2} + n / \eps^{d^2+d})$. Similar results are achieved for a number of other distance functions that are considered in~\cite{HaK15}.

\begin{figure*}[tp]
  \centerline{\includegraphics[scale=0.60]{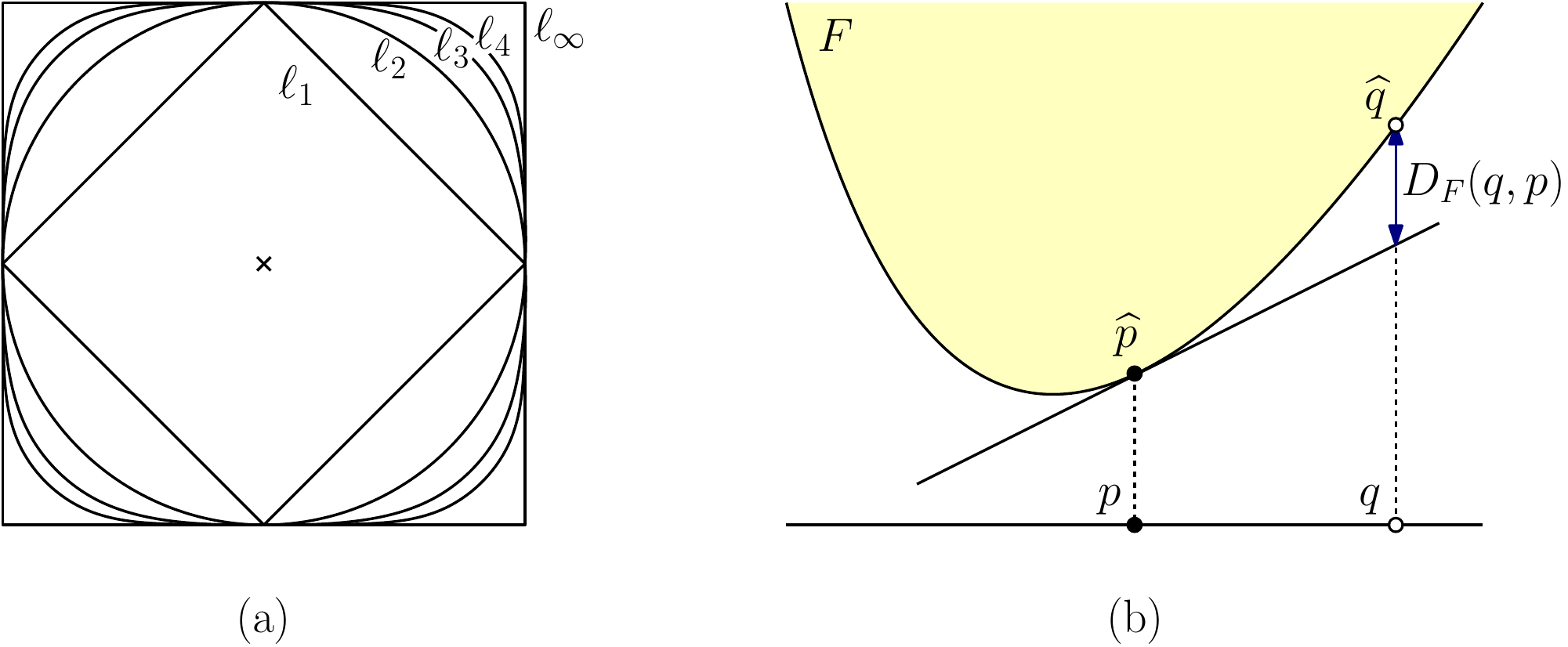}}
  \caption{(a) Unit balls in different Minkowski norms and (b) a geometric interpretation of the Bregman divergence.}
  \label{f:distances}
\end{figure*}

This motivates the question of whether it is possible to answer ANN queries for non-Euclidean distance functions while matching the best bounds for Euclidean ANN queries. In this paper, we present a general approach for designing such data structures achieving $O(\log\frac{n}{\eps})$ query time and $O((n/\eps^{d/2}) \log\inv\eps)$ storage. Thus, we suffer only an extra $\log\inv\eps$ factor in the space bounds compared to the best results for Euclidean $\eps$-ANN searching. We demonstrate the power of our approach by applying it to a number of natural problems:

\begin{description}
\item[Minkowski Distance:] The $\ell_k$ distance (see Figure~\ref{f:distances}(a)) between two points $p$ and $q$ is defined as $\|q-p\|_k = (\sum_{i=1}^d |p_i - q_i|^k)^\frac{1}{k}$. Our results apply for any real constant $k > 1$.
   
\item[Multiplicative Weights:] Each site $p$ is associated with a weight $w_p > 0$ and $f_p(q) = w_p \|q-p\|$. The generalization of the Voronoi diagram to this distance function is known as the \emph{M\"{o}bius diagram} \cite{BoK03}. Our results generalize from $\ell_2$ to any Minkowski $\ell_k$ distance, for constant $k > 1$.

\item[Mahalanobis Distance:] Each site $p$ is associated with a $d \times d$ positive-definite matrix $M_p$ and $f_p(q) = \sqrt{(p-q)^{\Transpose} M_p (p-q)}$. Mahalanobis distances are widely used in machine learning and statistics. Our results hold under the assumption that for each point $p$, the ratio between the maximum and minimum eigenvalues of $M_p$ is bounded.
    
\item[Scaling Distance Functions:] Each site $p$ is associated with a closed convex body $K_p$ whose interior contains the origin, and $f_p(q)$ is the smallest $r$ such that $(q - p)/r \in K_p$ (or zero if $q = p$). (These are also known as \emph{convex distance functions}~\cite{ChD85}.) These generalize and customize normed metric spaces by allowing metric balls that are not centrally symmetric and allowing each site to have its own distance function.
\end{description}

Scaling distance functions generalize the Minkowski distance, multiplicative weights, and the Mahalanobis distance. Our results hold under the assumption that the convex body $K_p$ inducing the distance function satisfies two assumptions. First, it needs to be \emph{fat} in the sense that it can be sandwiched between two Euclidean balls centered at the origin whose radii differ by a constant factor. Second, it needs to be \emph{smooth} in the sense that the radius of curvature for every point on $K_p$'s boundary is within a constant factor of its diameter. (Formal definitions will be given in Section~\ref{s:scaling-admiss}.)

\begin{theorem}[ANN for Scaling Distances] \label{thm:convexdf}
Given an approximation parameter $0 < \eps \le 1$ and a set $S$ of $n$ sites in $\RE^d$ where each site $p \in S$ is associated with a fat, smooth convex body $K_p \subset \RE^d$ (as defined above), there exists a data structure that can answer $\eps$-approximate nearest-neighbor queries with respect to the respective scaling distance functions defined by $K_p$ with
\[
	\hbox{Query time:~} O\kern-2pt \left(\log \frac{n}{\eps} \right)
	\quad\hbox{and}\quad
	\textrm{Space:~} O\kern-2pt \left( \frac{n \log\inv\eps}{\eps^{d/2}} \right).
\]
\end{theorem}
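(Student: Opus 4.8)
The plan is to reduce the non-Euclidean scaling-distance problem to a Euclidean-like problem through the \emph{convexification} technique advertised in the abstract, and then to invoke the known Euclidean $\eps$-ANN machinery of Arya~\etal~\cite{AFM17a,AFM17b} (which gives query time $O(\log\frac{n}{\eps})$ and space $O(n/\eps^{d/2})$). Concretely, I would work with the lower-envelope function $\mathcal{F}_{\min} = \min_p f_p$ as in the Har-Peled--Kumar framework, but rather than attacking it abstractly I would exploit geometric structure. The first step is a standard reduction from the global $\eps$-ANN problem to a bounded-scale subproblem: using a quadtree/WSPD-style decomposition (or the approach of~\cite{HaK15}) one localizes the query to a region in which the distance to the nearest site lies in a known range, say $[1,2]$, at the cost of an $O(\log\frac{n}{\eps})$ overhead in query time and an $O(\log\frac{1}{\eps})$ factor in space — this is exactly where the extra $\log\frac{1}{\eps}$ in the theorem comes from. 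Within one such bounded subproblem, each relevant distance function $f_p$ restricted to the relevant ``ring'' around $p$ has a sublevel set that is a translate-and-scale of the fat, smooth body $K_p$.

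The heart of the argument — and the step I expect to be the main obstacle — is \emph{convexification}: showing that, after a suitable (site-dependent) transformation of space, each distance function $f_p$ becomes, up to a $1+O(\eps)$ distortion on the scale of interest, a function whose sublevel sets are convex \emph{and} ``Euclidean-like'' enough that the lifting/polytope-membership reduction of~\cite{AFM18a,AFM17a} goes through. The fatness hypothesis ensures all the $K_p$ are within a constant factor of a common Euclidean ball, so the distortions across different sites are uniformly bounded; the smoothness hypothesis (radius of curvature $\Theta(\diam K_p)$ everywhere on $\bd K_p$) is what lets a macroscopic convex body be approximated, at resolution $\eps$, by something with the curvature profile of a ball — this is the quantitative ingredient that makes the $\eps^{d/2}$ rather than $\eps^{d}$ bound possible, mirroring how $\eps$-approximation of smooth convex bodies needs only $O(1/\eps^{(d-1)/2})$ facets rather than $O(1/\eps^{d-1})$. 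I would phrase this as: the graph of each $f_p$ over the relevant ring, when lifted appropriately, lies within vertical distance $\eps$ of a convex hypersurface in $\RE^{d+1}$ of bounded curvature, so that the lower envelope $\mathcal{F}_{\min}$ is sandwiched between two convex functions differing by $O(\eps)$ on the scale $\Theta(1)$.

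Granting the convexification step, the remainder is assembly. One replaces each $f_p$ by its convexified surrogate, takes the lower envelope of the surrogates, and observes that answering an $\eps$-ANN query against $\mathcal{F}_{\min}$ reduces (within the bounded subproblem, after clipping to the ring) to an approximate-polytope-membership / approximate-halfspace-range query of the type solved in~\cite{AFM18a,AFM17a}, which supplies the $O(\log\frac{n}{\eps})$ query time and $O(n/\eps^{d/2})$ space for that subproblem. Summing over the $O(\log\frac{1}{\eps})$ scales handled by the outer decomposition, and over the pieces of the quadtree decomposition (whose total size is $O(n)$ up to the already-accounted factors), yields total space $O\bigl((n/\eps^{d/2})\log\frac{1}{\eps}\bigr)$ and query time $O(\log\frac{n}{\eps})$, as claimed. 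The main things to be careful about are: (i) that the transformations used to convexify different sites are mutually compatible, i.e.\ that the surrogate lower envelope still correctly $(1+\eps)$-approximates true nearest-neighbor distances after all per-site distortions are composed; (ii) that clipping $f_p$ to the relevant ring before convexifying does not hurt — here the growth-rate/fatness conditions guarantee that a site whose true distance is far outside $[1,2]$ cannot become the $\eps$-approximate winner; and (iii) bookkeeping the constants so that the constant-factor fatness and smoothness bounds enter only the hidden constants and not the exponent of $\eps$. I would isolate (i)--(iii) as lemmas (presumably proved in Section~\ref{s:scaling-admiss} and the sections that follow) so that Theorem~\ref{thm:convexdf} follows by combining the bounded-scale construction with the outer decomposition.
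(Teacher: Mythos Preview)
Your high-level picture---localize via a quadtree-type decomposition, convexify the distance functions within each piece, then invoke the approximate ray-shooting machinery of~\cite{AFM17a}---matches the paper's architecture, and you correctly identify that the extra $\log(1/\eps)$ factor arises from the size of the decomposition. However, there are two substantive gaps.

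First, your description of convexification is not what the paper does and, as stated, would be hard to make work. You speak of per-site transformations that must be ``mutually compatible,'' and of each $f_p$ being within vertical $\eps$ of some convex hypersurface. The paper's mechanism is simpler and sharper: the fatness and smoothness hypotheses imply (Lemma~\ref{l:scaling-admiss}) that each $f_p$ is \emph{$\tau$-admissible}, meaning $\|\Gradient f_p(x)\|\cdot\|x-p\|$ and $\|\Hess f_p(x)\|\cdot\|x-p\|^2$ are bounded by $\tau f_p(x)$ and $\tau^2 f_p(x)$. Once the sites are $(2\tau)$-separated from a ball $B$, this yields a uniform upper bound on all Hessian eigenvalues over $B$, and one adds a \emph{single common} concave quadratic $\phi$ (depending on $B$, not on the site) to every $f_i$. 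This makes each $\conc{f}_i$ exactly concave, so the lower envelope is concave and Lemma~\ref{l:ray-shoot-vertical} applies directly. There is no per-site distortion to reconcile and no $\eps$-approximation in the convexification itself; absolute vertical errors are preserved exactly because the same $\phi$ is added everywhere.

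Second, and more seriously, your reduction to a ``bounded-scale subproblem where the nearest-site distance lies in $[1,2]$'' does not by itself yield the separation needed for convexification. The obstruction is the \emph{inner cluster}: within a leaf cell $w$ there may be a tight cluster of sites inside a tiny ball $B_w$ far from $w$, each with its own scaling body $K_p$. These sites are \emph{not} $(2\tau)$-separated from one another, and since their distance functions differ you cannot collapse them to a single representative as in the Euclidean case. The paper handles this with a separate argument (Section~\ref{s:scaling-data-struct}): perturb every such site to the common center $p'$ of $B_w$ (Lemma~\ref{l:perturbation} shows this costs only a $1+O(\tau/\beta)$ relative error when $\beta=\Theta(\tau/\eps)$), then exploit the homogeneity of scaling distances to push all queries onto the boundary of a fixed hypercube about $p'$, which is covered by $O(\tau^{d-1})$ small cells each $(2\tau)$-separated from $p'$. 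Without this step your plan has no way to answer queries when the candidate sites are clustered, and this is precisely the case that distinguishes weighted/anisotropic distances from the Euclidean setting.
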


Another important application that we consider is the Bregman divergence. Examples of Bregman divergences include the square of any Mahalanobis distance (and hence the square of the Euclidean distance)~\cite{BNN10}, the Kullback-Leibler divergence (also known as relative entropy)~\cite{KuL51}, and the Itakura-Saito distance~\cite{ItS68} among others. They have numerous applications in machine learning and computer vision~\cite{BMD05,STG10}.

\begin{description}
\item[Bregman Divergence:] Given an open convex domain $\XX \subseteq \RE^d$, a strictly convex and differentiable real-valued function $F$ on $\XX$, and $q, p \in \XX$, the \emph{Bregman divergence} of $q$ from $p$ is
\[
    D_F(q, p)
   		~ = ~ F(q) - (F(p) +  \ang{\Gradient F(p), q-p}),
\]
where $\Gradient F$ denotes the gradient of $F$ and ``$\ang{\cdot,\cdot}$'' denotes the standard dot product.  
\end{description}

The Bregman divergence has the following geometric interpretation (see Figure~\ref{f:distances}(b)). Let $\widehat{p}$ denote the vertical projection of $p$ onto the graph of $F$, that is, $(p, F(p))$, and define $\widehat{q}$ similarly. $D_F(q, p)$ is the vertical distance between $\widehat{q}$ and the hyperplane tangent to $F$ at the point $\widehat{p}$. Equivalently, $D_F(q, p)$ is just the error that results by estimating $F(q)$ by a linear model at $p$. 

Bregman divergences lack many of the properties of typical distance functions. They are generally not symmetric, that is, $D_F(q, p) \neq D_F(p, q)$ and need not satisfy the triangle inequality. As a function of $q$ alone, Bregman divergences are convex functions. Throughout, we treat the first argument $q$ as the query point and the second argument $p$ as the site, but it is possible to reverse these through dualization \cite{BNN10}.

Data structures have been presented for answering exact nearest-neighbor queries in the Bregman divergence by Cayton~\cite{Cay08} and Nielson {\etal}~\cite{NPB09}, but no complexity analysis was given. Worst-case bounds have been achieved by imposing restrictions on the function $F$. A variety of complexity measures have been proposed, including the following. Given a parameter $\mu \ge 1$, and letting $\|p-q\|$ denote the Euclidean distance between $p$ and $q$:
\begin{itemize}
\item $D_F$ is \emph{$\mu$-asymmetric} if for all $p, q \in \XX$, $D_F(q, p) \leq \mu D_F(p, q)$.

\item $D_F$ is \emph{$\mu$-similar}%
\footnote{Our definition of $\mu$-similarity differs from that of \cite{AcB09}. First, we have replaced $1/\mu$ with $\mu$ for compatibility with asymmetry. Second, their definition allows for any Mahalanobis distance, not just Euclidean. This is a trivial distinction in the context of nearest-neighbor searching, since it is possible to transform between such distances by applying an appropriate positive-definite linear transformation to the query space.}
if for all $p, q \in \XX$, $\|q - p\|^2 \leq D_F(q, p) \leq \mu \kern+1pt \|q - p\|^2$.
\end{itemize}

Abdullah {\etal}~\cite{AMV13} presented data structures for answering $\eps$-ANN queries for decomposable%
\footnote{A Bregman divergence over $\RE^d$ is \emph{decomposable} if it can be expressed as the $d$-fold sum of a $1$-dimensional Bregman divergence applied to each coordinate.}
Bregman divergences in spaces of constant dimension under the assumption of bounded similarity. Later, Abdullah and Venkatasubramanian~\cite{AbV15} established lower bounds on the complexity of Bregman ANN searching under the assumption of bounded asymmetry.

Our results for ANN searching in the Bregman divergence are stated below. They hold under a related measure of complexity, called \emph{$\tau$-admissibility}, which is more inclusive (that is, weaker) than $\mu$-similarity, but seems to be more restrictive than $\mu$-asymmetry. It is defined in Section~\ref{s:bregman-complexity}, where we also explore the relationships between these measures.

\begin{theorem}[ANN for Bregman Divergences] \label{thm:bregman}
Given a $\tau$-admissible Bregman divergence $D_F$ for a constant $\tau$ defined over an open convex domain $\XX \subseteq \RE^d$, a set $S$ of $n$ sites in $\RE^d$, and an approximation parameter $0 < \eps \le 1$, there exists a data structure that can answer $\eps$-approximate nearest-neighbor queries with respect to $D_F$ with
\[
	\hbox{Query time:~} O\kern-2pt \left(\log \frac{n}{\eps} \right)
	\quad\hbox{and}\quad
	\textrm{Space:~} O\kern-2pt \left( \frac{n \log\inv\eps}{\eps^{d/2}} \right).
\]
The dependence on $\tau$ in the query time is $O(\log \tau)$ and $O(\tau^d)$ in the storage.
\end{theorem}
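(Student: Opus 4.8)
\medskip
\noindent\textit{Proof plan.}
The plan is to derive Theorem~\ref{thm:bregman} from the general data structure established above for admissible distance functions (the same machinery underlying Theorem~\ref{thm:convexdf}), by showing that the family of functions $\{f_p = D_F(\cdot,p)\}_{p \in S}$ arising from a $\tau$-admissible Bregman divergence is admissible in the required technical sense, with every admissibility parameter bounded by a polynomial of constant degree in $\tau$. Once this reduction is in place, substituting the resulting parameters into the generic space bound --- which depends on the fatness parameter $\gamma$ roughly like $\gamma^{O(d)}$ --- yields the claimed $O\!\left((n/\eps^{d/2})\log\inv\eps\right)$ storage with an extra $\tau^{O(d)}$ factor, while the query bound $O\!\left(\log\frac{n}{\eps}\right)$ acquires only an $O(\log\tau)$ factor.

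First I would record the local structure of $D_F$. Since $F$ is strictly convex and differentiable, each $f_p$ is convex, its gradient at $q$ is $\Gradient F(q) - \Gradient F(p)$, and (where $F$ is twice differentiable) its Hessian at $q$ is $\Hess F(q) \succ 0$; hence every Bregman ball $B_p(r) = \{q \in \XX : D_F(q,p) \le r\}$ is a convex body with smooth, positively curved boundary, and near $p$ one has the quadratic model $D_F(p + t v, p) = \half{t^2}\, v^{\Transpose}\Hess F(p)\, v + o(t^2)$. The heart of the argument is to convert these infinitesimal facts into the \emph{scale-free} geometric estimates that the data structure consumes, using $\tau$-admissibility (Section~\ref{s:bregman-complexity}) in place of the Euclidean lifting transformation. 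Concretely I would show: (i) \emph{Fatness} --- for every site $p$ and radius $r$, after translating $p$ to the origin and applying the linear map that puts $\Hess F(p)$ in isotropic position, $B_p(r)$ is sandwiched between two concentric Euclidean balls whose radii differ by a factor depending only on $\tau$, the admissibility inequality being exactly what bounds the eccentricity of the sublevel sets uniformly in $r$; (ii) \emph{Smoothness} --- along the same normalization the radius of curvature at each boundary point of $B_p(r)$ is within a $\tau$-dependent factor of its diameter, which follows from convexity of $F$ together with admissibility forcing $\Hess F$ to be ``well conditioned at the relevant scale'' even though admissibility, being weaker than $\mu$-similarity, does not bound the global condition number of $\Hess F$; (iii) \emph{Controlled growth} --- as $q$ recedes from $p$ along a ray, $D_F(q,p)$ grows polynomially, squeezed between $t^{c_1}$ and $t^{c_2}$ times its value at unit parameter for constants $c_1,c_2$ depending on $\tau$, together with a Lipschitz-type condition letting two query points that are close at the relevant scale be treated as one. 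These are precisely the hypotheses under which the generic construction --- a hierarchical BBD-tree/quadtree subdivision in which each cell stores representative sites obtained by approximating each relevant Bregman ball by a polytope with roughly $O(1/\eps^{(d-1)/2})$ facets --- answers queries in logarithmic time within the stated space.

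One point that genuinely distinguishes the Bregman case from Theorem~\ref{thm:convexdf} is that Bregman balls are \emph{not} homogeneous: $B_p(r)$ is not a scaled copy of $B_p(1)$, and both its shape and its offset relative to $p$ drift with $r$. The remedy is to have the data structure represent each site at a logarithmic range of distance scales, with a possibly different approximating polytope per scale; steps (i)--(iii) must therefore be proved \emph{uniformly} over all scales and all sites, so that the per-scale cost matches the Euclidean bound and the $O(\log\inv\eps)$ scales contribute only the advertised $\log\inv\eps$ factor. I expect step (ii) --- extracting honest geometric smoothness of the Bregman balls, uniformly in $r$, from the comparatively weak admissibility hypothesis --- to be the main obstacle; steps (i) and (iii) should be essentially unwindings of the definition of $\tau$-admissibility, and propagating the $\tau$-dependence through the final space and query bounds is bookkeeping.
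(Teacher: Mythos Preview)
Your plan has a genuine gap: it inverts the logical flow of the paper's machinery and, as a result, routes the argument through an unnecessary and likely unworkable detour. In the paper, fatness and smoothness of metric balls are \emph{hypotheses} used (via Lemma~\ref{l:scaling-admiss}) to \emph{derive} $\tau$-admissibility for scaling distances; they are not consequences of admissibility, nor are they what the data structure actually consumes. The core engine is Lemma~\ref{l:ray-shoot-general}, which needs only that each $f_p$ be $\tau$-admissible and that the sites be $(2\tau)$-separated from the query ball---nothing about the shape of level sets. Since $\tau$-admissibility of $D_F$ is already the hypothesis of Theorem~\ref{thm:bregman}, the outer-cluster case goes through immediately by convexification, with no need to establish fatness or smoothness of Bregman balls.

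Your multi-scale polytope idea for handling the non-homogeneity of Bregman balls is where the plan would actually break. Bregman divergences are not scaling distances, so Theorem~\ref{thm:convexdf} does not apply even after proving (i)--(iii), and the ``logarithmic range of distance scales per site'' patch is both vague and not how the AVD works. The paper handles the inner cluster for Bregman by a completely different---and much simpler---argument than for scaling distances: it shows (Lemma~\ref{l:bregman-inner}, using the three-point identity and $\tau$-admissibility) that if $\beta \ge 4\tau^2/\eps$ then all sites in $B_w$ are within a $(1+\eps)$ factor of each other in divergence from any $q \in w$, so a single representative suffices. There is no perturbation, no ray shooting on a hypercube boundary, and no per-scale approximation. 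You should replace steps (i)--(iii) and the multi-scale scheme with: (a) invoke Lemma~\ref{l:ray-shoot-general} directly for the outer cluster, and (b) prove the inner-cluster collapse lemma using Lemma~\ref{l:bregman-util} and admissibility.
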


\medskip

Note that our results are focused on the \emph{existence} of these data structures, and construction time is not discussed. While we see no significant impediments to their efficient construction by modifying the constructions of related data structures, a number of technical results would need to be developed. We therefore leave the question of efficient construction as a rather technical but nonetheless important open problem.

\subsection{Methods}

Our solutions are all based on the application of a technique, called \emph{convexification}. Recently, the authors showed how to efficiently answer several approximation queries with respect to convex polytopes~\cite{AFM17a,AFM17b,AFM18b,AbM18}, including polytope membership, ray shooting, directional width, and polytope intersection. As mentioned above, the linearization technique using the lifting transformation can be used to produce convex polyhedra for the sake of answering ANN queries, but it is applicable only to the Euclidean distance (or more accurately the squared Euclidean distance and the related power distance~\cite{Aur87a}). In the context of approximation, polytopes are not required. The convex approximation methods described above can be adapted to work on any convex body, even one with curved boundaries. This provides us with an additional degree of flexibility. Rather than applying a transformation to linearize the various distance functions, we can go a bit overboard and ``convexify'' them.

Convexification techniques have been used in non-linear optimization for decades~\cite{Ber79}, for example the $\alpha$BB optimization method locally convexifies constraint functions to produce constraints that are easier to process~\cite{AMF95}. However, we are unaware of prior applications of this technique in computational geometry in the manner that we use it. (For an alternate use, see \cite{Cla06}.)

The general idea involves the following two steps. First, we apply a quadtree-like approach to partition the query space (that is, $\RE^d$) into cells so that the restriction of each distance function within each cell has certain ``nice'' properties, which make it possible to establish upper bounds on the gradients and the eigenvalues of their Hessians. We then add to each function a common ``convexifying'' function whose Hessian has sufficiently small (in fact negative) eigenvalues, so that all the functions become concave (see Figure~\ref{f:convexify} in Section~\ref{s:convexify} below). We then exploit the fact that the lower envelope of concave functions is concave. The region lying under this lower envelope can be approximated by standard techniques, such as the ray-shooting data structure of \cite{AFM17a}. We show that if the distance functions satisfy some \emph{admissibility conditions}, this can be achieved while preserving the approximation errors.

\medskip

The rest of the paper is organized as follows. In the next section we present definitions and preliminary results. Section~\ref{s:convexify} discusses the concept of convexification, and how it is applied to vertical ray shooting on the minimization diagram of sufficiently well-behaved functions. In Section~\ref{s:scaling-dist}, we present our solution to ANN searching for scaling distance functions, proving Theorem~\ref{thm:convexdf}. In Section~\ref{s:bregman-ann}, we do the same for the case of Bregman divergence, proving Theorem~\ref{thm:bregman}. Finally, in Section~\ref{s:deferred} we present technical details that have been deferred from the main body of the paper.

\section{Preliminaries} \label{s:prelim}

In this section we present a number of definitions and results that will be useful throughout the paper.

\subsection{Definitions and Notation} \label{s:notation}

For the sake of completeness, let us recall some standard definitions. Given a function $f: \RE^d \rightarrow \RE$, its \emph{graph} is the set of $(d+1)$-dimensional points $(x,f(x))$, for $x \in \RE^d$. For the sake of illustration, we treat the last coordinate axis as being directed vertically upwards. The function's \emph{epigraph} is the set of points on or above the graph, and its \emph{hypograph} is the set of points on or below the graph.

The gradient and Hessian of a function generalize the concepts of the first and second derivative to a multidimensional setting. The \emph{gradient} of $f$, denoted $\Gradient f$, is defined as the vector field $\big( \frac{\partial f}{\partial x_1}, \ldots, \frac{\partial f}{\partial x_d} \big)^{\Transpose}$. The gradient vector points in a direction in which the function grows most rapidly. For any point $x$ and any unit vector $v$, the rate of change of $f$ along $v$ is given by the dot product $\ang{\Gradient f(x), v}$. The \emph{Hessian} of $f$ at $x$, denoted $\Hess f(x)$, is a $d \times d$ matrix of second-order partial derivatives at $x$. For twice continuously differentiable functions, $\Hess f(x)$ is symmetric, implying that it has $d$ (not necessarily distinct) real eigenvalues. 

Given a $d$-vector $v$, let $\|v\|$ denote its length under the \emph{Euclidean norm}, and the \emph{Euclidean distance} between points $p$ and $q$ is $\|q - p\|$. Given a $d \times d$ matrix $A$, its \emph{spectral norm} is $\|A\| = \sup \, \{\|Ax\| \, / \, \|x\| \,:\, x \in \RE^d \mbox{ and } x \neq 0\}$. Since the Hessian is a symmetric matrix, it follows that $\|\Hess f(x)\|$ is the largest absolute value attained by the eigenvalues of $\Hess f(x)$.

A real-valued function $f$ defined on a nonempty subset $\XX$ of $\RE^d$ is \emph{convex} if the domain $\XX$ is convex and for any $x,y \in \XX$ and $\alpha \in [0,1]$, $f(\alpha x + (1 - \alpha) y) \leq \alpha f(x) + (1 - \alpha) f(y)$. It is \emph{concave} if $-f$ is convex. A twice continuously differentiable function on a convex domain is convex if and only if its Hessian matrix is positive semidefinite in the interior of the domain. It follows that all the eigenvalues of the Hessian of a convex function are nonnegative.
 
 Given a function $f: \RE^d \rightarrow \RE$ and a closed Euclidean ball $B$ (or generally any closed bounded region), let $f^+(B)$ and $f^-(B)$ denote the maximum and minimum values, respectively, attained by $f(x)$ for $x \in B$. Similarly, define $\|\Gradient f^+(B)\|$ and $\|\Hess f^+(B)\|$ to be the maximum values of the norms of the gradient and Hessian, respectively, for any point in $B$.

\subsection{Minimization Diagrams and Vertical Ray Shooting} \label{s:ray-shoot}

Consider a convex domain $\XX \subseteq \RE^d$ and a set of functions $\mathcal{F} = \{f_1, \ldots, f_m\}$, where $f_i : \XX \rightarrow \RE^+$. Let $\mathcal{F}_{\min}$ denote the associated \emph{lower-envelope function}, that is $\mathcal{F}_{\min}(x) = \min_{1 \leq i \leq m} f_i(x)$. As Har-Peled and Kumar \cite{HaK15} observed, for any $\eps > 0$, we can answer $\eps$-ANN queries on any set $S$ by letting $f_i$ denote the distance function to the $i$th site, and computing any index $i$ (called a \emph{witness}) such that $f_i(q) \leq (1+\eps) \mathcal{F}_{\min}(q)$.

We can pose this as a geometric approximation problem in one higher dimension. Consider the hypograph in $\RE^{d+1}$ of $\mathcal{F}_{\min}$. Answering $\eps$-ANN queries in the above sense can be thought of as approximating the result of a vertical ray shot upwards from the point $(q,0) \in \RE^{d+1}$ until it hits the lower envelope, where the allowed approximation error is $\eps \mathcal{F}_{\min}(q)$. Because the error is relative to the value of $\mathcal{F}_{\min}(q)$, this is called a \emph{relative $\eps$-AVR query}. It is also useful to consider a variant in which the error is absolute. An \emph{absolute $\eps$-AVR query} returns any witness $i$ such that $f_i(q) \le \eps + \mathcal{F}_{\min}(q)$ (see Fig.~\ref{f:ray-shoot}).

The hypograph of a general minimization diagram can be unwieldy. Our approach to answer AVR queries efficiently will involve subdividing space into regions such that within each region it is possible to transform the hypograph into a convex shape. In the next section, we will describe this transformation. Given this, our principal utility for answering $\eps$-AVR queries efficiently is encapsulated in the following lemma (see Figure~\ref{f:ray-shoot}). The proof has been deferred to Section~\ref{s:deferred}.

\begin{figure}[htbp]
  \centerline{\includegraphics[scale=0.40]{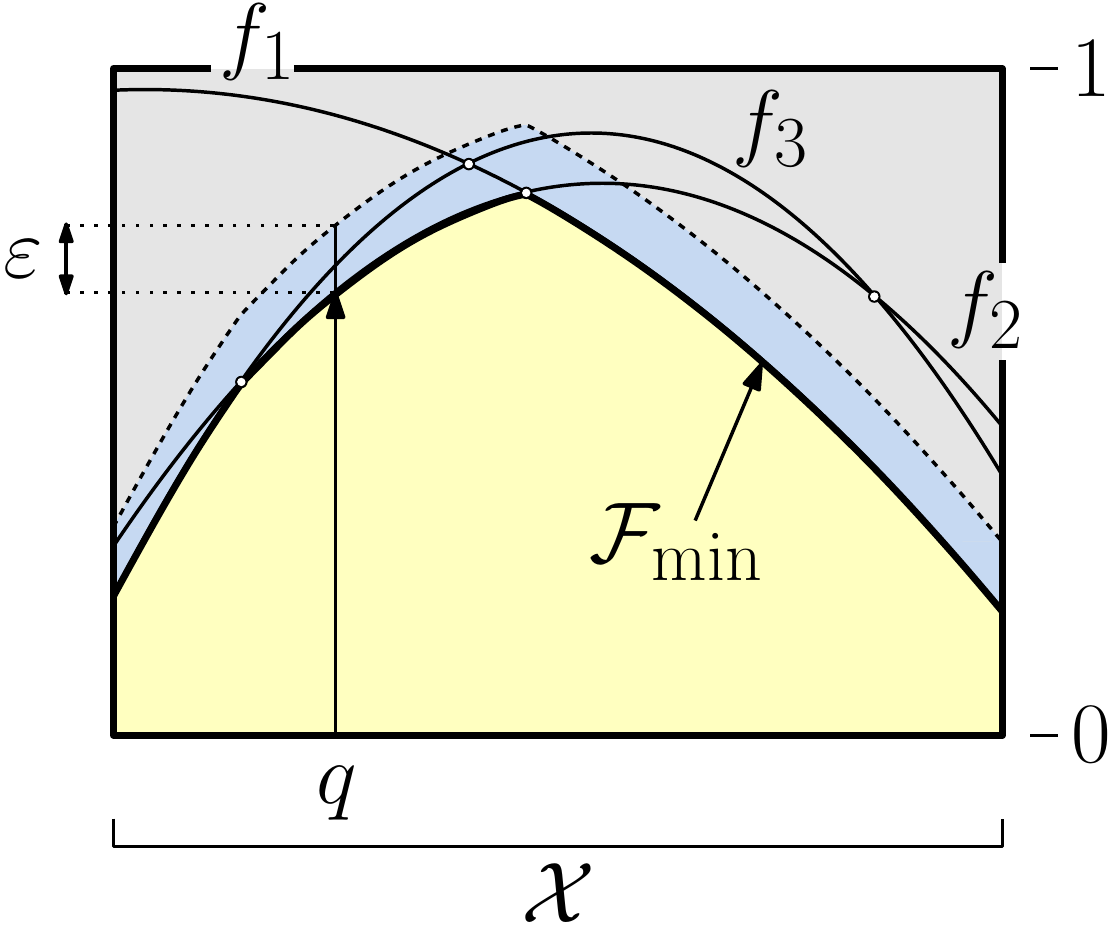}}
  \caption{Approximate AVR query assuming absolute errors. For the query $q$, the exact answer is $f_2$, but $f_3$ would be acceptable.}
  \label{f:ray-shoot}
\end{figure}

\begin{restatable}{lemma}{rayshootvertical}
\label{l:ray-shoot-vertical}
\emph{(Answering $\eps$-AVR Queries)} 
Consider a unit ball $B \subseteq \RE^d$ and a family of concave functions $\mathcal{F} = \{f_1, \ldots, f_m\}$ defined over $B$ such that for all $1 \leq i \leq m$ and $x \in B$, $f_i(x) \in [0,1]$ and $\|\Gradient f_i(x)\| \leq 1$. Then, for any $0 < \eps \leq 1$, there is a data structure that can answer absolute $\eps$-AVR queries in time $O\big( \log \inv{\eps} \big)$ and storage $O\big( (\inv{\eps})^{d/2} \big)$.
\end{restatable}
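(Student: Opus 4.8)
The plan is to reduce the approximate vertical ray-shooting problem on the lower envelope $\mathcal{F}_{\min}$ to an approximate polytope/convex-body ray-shooting (or equivalently approximate nearest-point-on-convex-body) query of the kind solved in \cite{AFM17a,AFM18a}, and then invoke their space-time tradeoff at the appropriate precision. First I would observe that, since each $f_i$ is concave on $B$, its hypograph $H_i = \{(x,t) : x \in B,\ t \le f_i(x)\}$ is a convex subset of $B \times [0,1] \subset \RE^{d+1}$, and the hypograph of $\mathcal{F}_{\min}$ is the intersection $H = \bigcap_i H_i$, which is therefore convex. Moreover the bounds $f_i(x) \in [0,1]$ and $\|\Gradient f_i(x)\| \le 1$ mean that $H$ is a convex body sandwiched between two nested shapes of constant "aspect ratio" over $B$: the Lipschitz bound of $1$ guarantees the boundary $t = \mathcal{F}_{\min}(x)$ is a Lipschitz graph, so $H$ is fat in the directions that matter, and it lives inside a box of constant size. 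A vertical ray shot from $(q,0)$ upward hits $\bd H$ at height $\mathcal{F}_{\min}(q)$, and returning a witness $i$ with $f_i(q) \le \eps + \mathcal{F}_{\min}(q)$ is exactly an \emph{absolute} $\eps$-approximation to this ray-shooting query.

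Next I would invoke the machinery of \cite{AFM17a} (approximate ray shooting / approximate polytope membership via the "Macbeath region" or "economical cover" approach), which builds, for a fat convex body in $\RE^{d+1}$, a data structure of space $O(1/\eps^{(d+1-1)/2}) = O(1/\eps^{d/2})$ supporting $O(\log(1/\eps))$-time approximate queries — note the ambient dimension is $d+1$, so the exponent is $((d+1)-1)/2 = d/2$, which is exactly the bound claimed. The subtlety is that those data structures are stated for \emph{explicitly given} convex polytopes, whereas here $H$ is an intersection of $m$ curved hypographs and we want no dependence on $m$ in the bounds. I would handle this by not representing $H$ explicitly at all: the approximation data structure only needs a \emph{separation/membership oracle}, and membership of a point $(x,t)$ in $H$ is decided by checking $t \le f_i(x)$ for all $i$; more importantly, the construction only probes $H$ at a number of locations depending on $\eps$ and $d$, not on $m$, so the $m$ sites enter only the (unmentioned) preprocessing time, not the space or query bounds. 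Alternatively, and perhaps more cleanly, I would quote the abstract version of the ray-shooting result which takes as input any fat convex body accessed through an oracle; the concavity and the gradient/value normalization supply exactly the fatness and boundedness hypotheses it requires.

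The main obstacle I anticipate is \textbf{converting between the relative-error guarantees native to \cite{AFM17a} and the absolute error demanded here}, together with handling the geometry near the boundary of $B$ and near places where $\mathcal{F}_{\min}$ is close to $0$. The approximate polytope/ray-shooting results are typically phrased so that the error is a fixed fraction of the body's "size" in the query direction; here we want an additive error of $\eps$ on a height that ranges over $[0,1]$, so we need the body $H$ to have vertical extent $\Theta(1)$ in a neighborhood of every relevant vertical line — this is where the uniform Lipschitz bound $\|\Gradient f_i\| \le 1$ and the restriction to the unit ball $B$ are used: they prevent $H$ from being arbitrarily thin, so an absolute $\eps$ error is a constant-factor relative error with respect to a constant-size enclosing shape, and the $\eps^{d/2}$ / $\log(1/\eps)$ bounds transfer directly. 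A secondary technical point is that $H$ need not contain a ball of radius bounded away from $0$ near $\bd B$ (the functions could vanish there); the standard fix is to first intersect with a slightly shrunken ball or to pad $H$ by adding a thin convex collar, losing only an $O(\eps)$ factor, which is absorbed into the approximation budget. Once these normalization issues are dispatched, the result follows by direct appeal to the ray-shooting data structure of \cite{AFM17a} in dimension $d+1$.
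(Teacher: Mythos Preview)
Your high-level plan is the same as the paper's: pass to the convex hypograph in $\RE^{d+1}$ and invoke the ray-shooting machinery of \cite{AFM17a}. Two concrete technical steps are missing, and they are precisely the ones the paper supplies.

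First, the data structure you want to invoke from \cite{AFM17a} answers \emph{central} ray-shooting queries (all rays emanate from a common origin) for a body that is \emph{centrally} $\gamma$-fat. Your queries are vertical (parallel), and the truncated hypograph is not centrally fat in any obvious way; your remark that the gradient and value bounds ``supply exactly the fatness'' is not correct as stated, and the ad hoc fixes you suggest (shrinking $B$, adding a collar) do not address the vertical-versus-central mismatch. The paper handles both issues at once by applying the specific projective transformation from Section~4 of \cite{AFM17a}: this map sends vertical rays to rays through a common point and, crucially, uses the three hypotheses you have (bounded domain diameter, bounded range of function values, bounded gradients) to guarantee that the image is centrally $\gamma$-fat for a constant $\gamma$, with relative errors preserved up to constants. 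This is the substantive step, and you should name it rather than hoping an ``abstract oracle version'' exists.

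Second, the central ray-shooting result of \cite{AFM17a} is stated for \emph{polytopes}, whereas the image under the projective map is a convex body with curved boundary (the $f_i$ are merely concave). The paper resolves this by applying Dudley's theorem to replace the body by a polytope with $O(1/\eps^{d/2})$ facets at Hausdorff distance $\eps/2$, then running the ACR structure on that polytope with error $\eps/2$. This also handles witness extraction cleanly: each Dudley facet is tagged with the (constantly many, under general position) functions incident to the corresponding boundary sample, and the query returns the minimizer among those. Your oracle-based sketch sidesteps both the polytope requirement and the witness issue without justification.
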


\section{Convexification} \label{s:convexify}

In this section we discuss the key technique underlying many of our results. As mentioned above, our objective is to answer $\eps$-AVR queries with respect to the minimization diagram, but this is complicated by the fact that it does not bound a convex set.

In order to overcome this issue, let us make two assumptions. First, we restrict the functions to a bounded convex domain, which for our purposes may be taken to be a closed Euclidean ball $B$ in $\RE^d$. Second, let us assume that the functions are smooth, implying in particular that each function $f_i$ has a well defined gradient $\Gradient f_i$ and Hessian $\Hess f_i$ for every point of $B$. As mentioned above, a function $f_i$ is convex (resp., concave) over $B$ if and only if all the eigenvalues of $\Hess f_i(x)$ are nonnegative (resp., nonpositive). Intuitively, if the functions $f_i$ are sufficiently well-behaved it is possible to compute upper bounds on the norms of the gradients and Hessians throughout $B$. Given $\mathcal{F}$ and $B$, let $\Lambda^+$ denote an upper bound on the largest eigenvalue of $\Hess f_i(x)$ for any function $f_i \in \mathcal{F}$ and for any point $x \in B$. 

We will apply a technique, called \emph{convexification}, from the field of nonconvex optimization \cite{Ber79,AMF95}. If we add to $f_i$ any function whose Hessian has a maximum eigenvalue at most $-\Lambda^+$, we will effectively ``overpower'' all the upward curving terms, resulting in a function having only nonpositive eigenvalues, that is, a concave function.%
\footnote{While this intuition is best understood for convex functions, it can be applied whenever there is an upper bound on the maximum eigenvalue.}
The lower envelope of concave functions is concave, and so techniques for convex approximation (such as Lemma~\ref{l:ray-shoot-vertical}) can be applied to the hypograph of the resulting lower-envelope function.

To make this more formal, let $p \in \RE^d$ and $r \in \RE_{\geq 0}$ denote the center point and radius of $B$, respectively. Define a function $\phi$ (which depends on $B$ and $\Lambda^+$) to be
\[
    \phi(x) 
        ~ = ~ \frac{\Lambda^+}{2} \left( r^2 - \sum_{j=1}^d (x_j - p_j)^2 \right)
        ~ = ~ \frac{\Lambda^+}{2} (r^2 - \|x - p\|^2).
\]
It is easy to verify that $\phi$ evaluates to zero along $B$'s boundary and is positive within $B$'s interior. Also, for any $x \in \RE^d$, the Hessian of $\|x - p\|^2$ (as a function of $x$) is a $d \times d$ diagonal matrix $2 I$, and therefore $\Hess \phi(x) = -\Lambda^+ I$. Now, for $1 \le i \le m$, define $\conc{f}_{i}(x) ~ = ~ f_i(x) + \phi(x)$ and
\[
        \conc{F}_{\min}(x) 
            ~ = \min_{1 \leq i \leq m} \conc{f}_{i}(x) 
            ~ = ~ \mathcal{F}_{\min}(x) + \phi(x).
\]

Because all the functions are subject to the same offset at each point $x$, $\conc{F}_{\min}$ preserves the relevant combinatorial structure of $\mathcal{F}_{\min}$, and in particular $f_i$ yields the minimum value to $\mathcal{F}_{\min}(x)$ at some point $x$ if and only if $\conc{f}_i$ yields the minimum value to $\conc{F}_{\min}(x)$. Absolute vertical errors are preserved as well. Observe that $\conc{F}_{\min}(x)$ matches the value of $\mathcal{F}_{\min}$ along $B$'s boundary and is larger within its interior. Also, since $\Hess \phi(x) = -\Lambda^+ I$, it follows from elementary linear algebra that each eigenvalue of $\Hess \conc{f}_i(x)$ is smaller than the corresponding eigenvalue of $\Hess f_i(x)$ by $\Lambda^+$. Thus, all the eigenvalues of $\conc{f}_i(x)$ are nonpositive, and so $\conc{f}_i$ is concave over $B$. In turn, this implies that $\conc{F}_{\min}$ is concave, as desired. We will show that, when properly applied, relative errors are nearly preserved, and hence approximating the convexified lower envelope yields an approximation to the original lower envelope.

\subsection{A Short Example} \label{s:example}

As a simple application of this technique, consider the following problem. Let $\mathcal{F} = \{f_1, \ldots, f_m\}$ be a collection of $m$ multivariate polynomial functions over $\RE^d$ each of constant degree and having coefficients whose absolute values are $O(1)$ (see Figure~\ref{f:convexify}(a)). It is known that the worst-case combinatorial complexity of the lower envelope of algebraic functions of fixed degree in $\RE^d$ lies between $\Omega(n^d)$ and $O(n^{d+\alpha})$ for any $\alpha > 0$ \cite{Sha94}, which suggests that any exact solution to computing a point on the lower envelope $\mathcal{F}_{\min}$ will either involve high space or high query time. 

\begin{figure*}[htbp]
  \centerline{\includegraphics[scale=0.40]{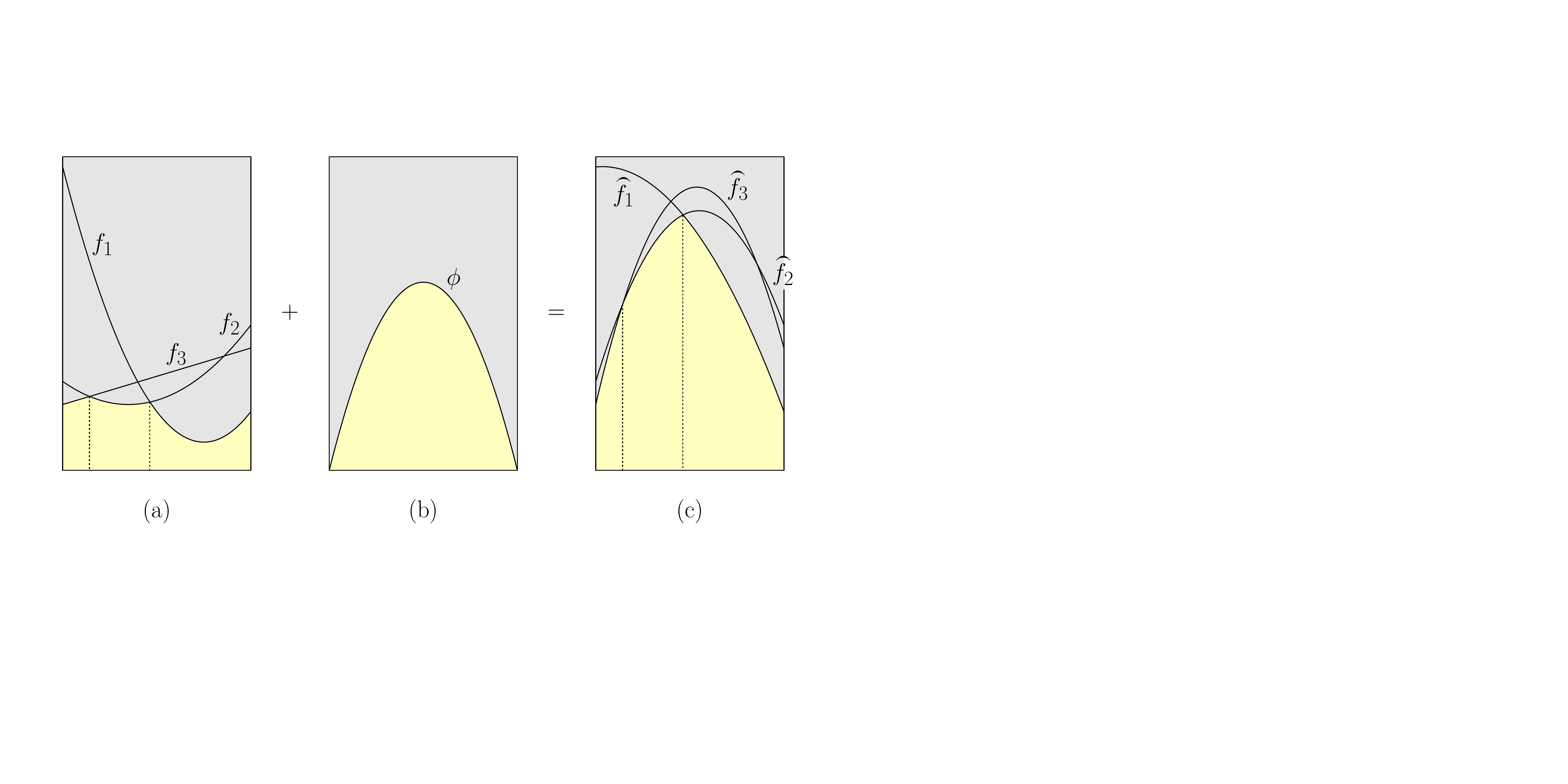}}
  \caption{Convexification. (a) The original functions $f_i$, (b) the convexifying function $\phi$, and (c) the modified functions $\conc{f}_i = f_i + \phi$.}
  \label{f:convexify}
\end{figure*}

Let us consider a simple approximate formulation by restricting $\mathcal{F}$ to a unit $d$-dimensional Euclidean ball $B$ centered at the origin. Given a parameter $\eps > 0$, the objective is to compute for any query point $q \in \RE^d$ an \emph{absolute $\eps$-approximation} by returning the index of a function $f_i$ such that $f_i(q) \le \mathcal{F}_{\min}(q) + \eps$. (While relative errors are usually desired, this simpler formulation is sufficient to illustrate how convexification works.) Since the degrees and coefficients are bounded, it follows that for each $x \in B$, the norms of the gradients and Hessians for each function $f_i$ are bounded. A simple naive solution would be to overlay $B$ with a grid with cells of diameter $\Theta(\eps)$ and compute the answer for a query point centered within each grid cell. Because the gradients are bounded, the answer to the query for the center point is an absolute $\eps$-approximation for any point in the cell. This produces a data structure with space $O((\inv{\eps})^d)$.

To produce a more space-efficient solution, we apply convexification. Because the eigenvalues of the Hessians are bounded for all $x \in B$ and all functions $f_i$, it follows that there exists an upper bound $\Lambda^+ = O(1)$ on all the Hessian eigenvalues. Therefore, by computing the convexifying function $\phi$ described above (see Figure~\ref{f:convexify}(b)) to produce the new function $\conc{F}_{\min}$ (see Figure~\ref{f:convexify}(c)) we obtain a concave function. It is easy to see that $\phi$ has bounded gradients and therefore so does $\conc{F}_{\min}$. The hypograph of the resulting function when suitably trimmed is a convex body of constant diameter residing in $\RE^{d+1}$. After a suitable scaling (which will be described later in Lemma~\ref{l:ray-shoot-general}), the functions can be transformed so that we may apply Lemma~\ref{l:ray-shoot-vertical} to answer approximate vertical ray-shooting queries in time $O(\log \inv{\eps})$ with storage $O((\inv{\eps})^{d/2})$. This \emph{halves} the exponential dependence in the dimension over the simple approach. 

\subsection{Admissible Distance Functions} \label{s:admissibility}

In this section we show that approximation errors can be bounded if the distance functions satisfy certain admissibility properties. We are given a domain $\XX \subseteq \RE^d$, and we assume that each distance function is associated with a defining site $p \in \XX$. Consider a distance function $f_p : \XX \rightarrow \RE^+$ with a well-defined gradient and Hessian for each point of $\XX$.%
\footnote{This assumption is really too strong, since distance functions often have undefined gradients or Hessians at certain locations (e.g., the sites themselves). For our purposes it suffices that the gradient and Hessian are well defined at any point within the region where convexification will be applied.}
Given $\tau > 0$, we say that $f_p$ is \emph{$\tau$-admissible} if for all $x \in \XX$:
\begin{enumerate}
\item[$(i)$] $\| \Gradient f_p(x)\| \kern+1pt \| x - p \| ~ \leq ~ \tau f_p(x)$, and

\item[$(ii)$] $\|\Hess \kern-1pt f_p(x)\| \kern+1pt \|x - p\|^2 ~ \leq ~ \tau^2 f_p(x)$.
\end{enumerate}
Intuitively, an admissible function exhibits growth rates about the site that are polynomially upper bounded. For example, it is easy to prove that $f_p(x) = \|x - p\|^c$ is $O(c)$-admissible, for any $c \geq 1$. 

Admissibility implies bounds on the magnitudes of the function values, gradients, and Hessians. Given a Euclidean ball $B$ and site $p$, we say that $B$ and $p$ are \emph{$\beta$-separated} if $\dist(p, B) / \diam(B) \geq \beta$ (where $\dist(p, B)$ is the minimum Euclidean distance between $p$ and $B$ and $\diam(B)$ is $B$'s diameter). The following lemma presents upper bounds on $f^+(B)$, $\|\Gradient f^+(B)\|$, and $\|\Hess f^+(B)\|$ in terms of these quantities. (Recall the definitions from Section~\ref{s:notation}.)

\begin{restatable}{lemma}{admissutil}
\label{l:admiss-util}
Consider an open convex domain $\XX$, a site $p \in \XX$, a $\tau$-admissible distance function $f_p$, and a Euclidean ball $B \subset \XX$. If $B$ and $p$ are $(\tau \kappa)$-separated for $\kappa > 1$, then:
\begin{enumerate}
\item[$(i)$] $f_p^+(B) ~\leq~ f_p^-(B) \kappa / (\kappa - 1)$,

\item[$(ii)$] $\|\Gradient f_p^+(B)\| ~\leq~ f_p^+(B) / (\kappa \cdot \diam(B))$, and

\item[$(iii)$] $\|\Hess f_p^+(B)\| ~\leq~ f_p^+(B) / (\kappa \cdot \diam(B))^2$.
\end{enumerate}
\end{restatable}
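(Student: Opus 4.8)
The plan is to derive each bound directly from the two $\tau$-admissibility inequalities, exploiting the $(\tau\kappa)$-separation to control the geometric quantities $\|x-p\|$ and $f_p(x)$ uniformly over $B$. The basic observation is that for every $x \in B$ we have $\dist(p,B) \le \|x-p\| \le \dist(p,B) + \diam(B)$, and the separation hypothesis $\dist(p,B) \ge \tau\kappa\,\diam(B)$ gives us both a lower bound $\|x-p\| \ge \tau\kappa\,\diam(B)$ and an upper bound $\|x-p\| \le \dist(p,B)(1 + 1/(\tau\kappa)) \le \dist(p,B)\cdot\frac{\tau\kappa+1}{\tau\kappa}$. These will be the only facts about $B$ and $p$ that I need.

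For part $(i)$, I would bound the variation of $f_p$ across $B$ using the gradient bound from admissibility condition $(i)$. Let $x^-$ and $x^+$ be points of $B$ attaining $f_p^-(B)$ and $f_p^+(B)$. Along the segment from $x^-$ to $x^+$ (which lies in $B$ by convexity), the mean value inequality gives $f_p^+(B) - f_p^-(B) \le \|\Gradient f_p^+(B)\|\cdot\diam(B)$. Now condition $(i)$ says $\|\Gradient f_p(x)\| \le \tau f_p(x)/\|x-p\| \le \tau f_p^+(B)/(\tau\kappa\,\diam(B)) = f_p^+(B)/(\kappa\,\diam(B))$, so $f_p^+(B) - f_p^-(B) \le f_p^+(B)/\kappa$, which rearranges to $f_p^+(B) \le f_p^-(B)\,\kappa/(\kappa-1)$. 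Note this argument actually proves $(ii)$ as a byproduct: the chain of inequalities just displayed is exactly $\|\Gradient f_p^+(B)\| \le f_p^+(B)/(\kappa\,\diam(B))$. Part $(iii)$ is entirely analogous — from condition $(ii)$, $\|\Hess f_p(x)\| \le \tau^2 f_p(x)/\|x-p\|^2 \le \tau^2 f_p^+(B)/(\tau\kappa\,\diam(B))^2 = f_p^+(B)/(\kappa\,\diam(B))^2$, and taking the max over $x \in B$ gives the claim.

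There is a mild circularity to be careful about: the bound on $\|\Gradient f_p^+(B)\|$ that I use inside the proof of $(i)$ already invokes $f_p^+(B)$ on the right-hand side, so I should make sure the $f_p^+(B)$ appearing there is the same quantity and that no hidden estimate of it is needed before $(i)$ is established — it is fine, since the gradient bound $(ii)$ and Hessian bound $(iii)$ are stated in terms of $f_p^+(B)$ itself, not $f_p^-(B)$, so they hold independently and $(i)$ is a separate consequence. The main obstacle, such as it is, is purely bookkeeping: keeping the direction of the separation inequality straight (we need $\|x-p\|$ large to make the gradient and Hessian small) and confirming that the passage from the pointwise admissibility bounds to the sup-over-$B$ bounds only loses the factor $f_p(x) \le f_p^+(B)$ and the factor coming from $\|x-p\| \ge \tau\kappa\,\diam(B)$, with the two powers of $\tau$ in condition $(ii)$ cancelling exactly against the $(\tau\kappa)^2$ from separation. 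No deeper idea is required; this is a routine quantitative unpacking of the definitions, which is presumably why the authors deferred it to Section~\ref{s:deferred}.
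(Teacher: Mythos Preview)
Your proposal is correct and follows essentially the same route as the paper's proof: both use $\tau$-admissibility together with the lower bound $\|x-p\| \ge \tau\kappa\,\diam(B)$ from separation to get the pointwise gradient and Hessian bounds, and then a mean-value argument along the segment $\overline{x^- x^+}$ to obtain~(i). The only cosmetic difference is that the paper applies the mean value theorem at a single point $s$ and bounds there, whereas you first establish~(ii) and plug it into the mean value inequality---the computations are identical (and, incidentally, the paper does not defer this proof; it appears inline in Section~\ref{s:admissibility}).
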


\begin{proof} 
To prove (i), let $x^+$ and $x^-$ denote the points of $B$ that realize the values of $f_p^+(B)$ and $f_p^-(B)$, respectively. By applying the mean value theorem, there exists a point $s$ on the line segment $\overline{x^- x^+}$ such that $f_p^+(B) - f_p^-(B) = \ang{\Gradient f_p(s), x^+ - x^-}$. By the Cauchy-Schwarz inequality
\[
    f_p^+(B) - f_p^-(B)
        ~  =   ~ \ang{\Gradient f_p(s), x^+ - x^-}
        ~ \leq ~ \| \Gradient f_p(s) \| \kern+1pt \| x^+ - x^- \|.
\]
By $\tau$-admissibility, $\|\Gradient f_p(s)\| \leq \tau f_p(s)/\|s - p\|$, and since $x^+, x^-, s \in B$, we have $\| x^+ - x^- \|/\|s - p\| \leq \diam(B)/\dist(p,B) \leq 1/(\tau \kappa)$. Thus,
\[
    f_p^+(B) - f_p^-(B)
        ~ \leq ~ \frac{\tau f_p(s)}{\| s - p \|} \kern+1pt \| x^+ - x^- \|
        ~ \leq ~ \frac{\tau f_p(s)}{\tau \kappa}
        ~ \leq ~ \frac{f_p^+(B)}{\kappa}.
\]
This implies that $f_p^+(B) \leq f_p^-(B) \kappa/ (\kappa - 1)$, establishing~(i).

To prove~(ii), consider any $x \in B$. By separation, $\dist(p, B) \geq \tau \kappa \cdot \diam(B)$. Combining this with $\tau$-admissibility and~(i), we have
\[
    \| \Gradient f_p(x) \|
          ~ \leq ~ \frac{\tau f_p(x)}{\|x - p\|}
          ~ \leq ~ \frac{\tau f_p^+(B)}{\dist(p, B)} 
          ~ \leq ~ \frac{\tau f_p^+(B)}{\tau \kappa \cdot \diam(B)}
          ~ = ~ \frac{f_p^+(B)}{\kappa \cdot \diam(B)}.
\]
This applies to any $x \in B$, thus establishing~(ii).

To prove~(iii), again consider any $x \in B$. By separation and admissibility, we have
\[
    \| \Hess f_p(x) \|
          ~ \leq ~ \frac{\tau^2 f_p(x)}{\|x - p\|^2}
          ~ \leq ~ \frac{\tau^2 f_p^+(B)}{\dist^2(p, B)} \\
          ~ \leq ~ \frac{f_p^+(B)}{(\kappa \cdot \diam(B))^2}.
\]
This applies to any $x \in B$, thus establishing~(iii).
\end{proof}

\subsection{Convexification and Ray Shooting} \label{s:convex-ray-shoot}

A set $\mathcal{F} = \{f_1, \ldots, f_m\}$ of $\tau$-admissible functions is called a \emph{$\tau$-admissible family of functions}. Let $\mathcal{F}_{\min}$ denote the associated lower-envelope function. In Lemma~\ref{l:ray-shoot-vertical} we showed that absolute $\eps$-AVR queries could be answered efficiently in a very restricted context. This will need to be generalized for the purposes of answering ANN queries, however. 

The main result of this section states that if the sites defining the distance functions are sufficiently well separated from a Euclidean ball, then (through convexification) $\eps$-AVR queries can be efficiently answered. The key idea is to map the ball and functions into the special structure required by Lemma~\ref{l:ray-shoot-vertical}, and to analyze how the mapping process affects the gradients and Hessians of the functions.

\begin{lemma} \label{l:ray-shoot-general}
\emph{(Convexification and Ray-Shooting)}
Consider a Euclidean ball $B \in \RE^d$ and a family of $\tau$-admissible distance functions $\mathcal{F} = \{f_1, \ldots, f_m\}$ over $B$ such that each associated site is $(2 \tau)$-separated from $B$. Given any $\eps > 0$, there exists a data structure that can answer relative $\eps$-AVR queries with respect to $\mathcal{F}_{\min}$ in time $O\big( \log \inv{\eps} \big)$ and storage $O\big( (\inv{\eps})^{d/2} \big)$.
\end{lemma}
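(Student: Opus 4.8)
# Proof Proposal for Lemma~\ref{l:ray-shoot-general}

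The plan is to reduce to Lemma~\ref{l:ray-shoot-vertical} by a normalization (scaling and translation) of the query space together with convexification, and to control the relative error incurred by the convexifying offset. First I would set up notation: let $r = \radius(B)$ and let $c$ be the center of $B$. Since each site $p_i$ defining $f_i$ is $(2\tau)$-separated from $B$, Lemma~\ref{l:admiss-util} (with $\kappa = 2$) applies to every $f_i$, giving us three facts on $B$: (i) $f_i^+(B) \le 2 f_i^-(B)$, so the function values are within a constant factor of each other across $B$; (ii) $\|\Gradient f_i^+(B)\| \le f_i^+(B)/(2\diam(B))$; and (iii) $\|\Hess f_i^+(B)\| \le f_i^+(B)/(2\diam(B))^2$. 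In particular, setting $\Lambda^+ = \max_i f_i^+(B)/(2\diam(B))^2$ gives a valid upper bound on all Hessian eigenvalues over $B$, and by (i) this $\Lambda^+$ is within a constant factor of $f_i^+(B)/\diam(B)^2$ for \emph{every} $i$ simultaneously.

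Next I would apply convexification exactly as in Section~\ref{s:convexify}: define $\phi(x) = \frac{\Lambda^+}{2}(r^2 - \|x-c\|^2)$ and $\conc{f}_i = f_i + \phi$, so that each $\conc{f}_i$ is concave on $B$ and hence $\conc{F}_{\min} = \mathcal{F}_{\min} + \phi$ is concave. The crucial observation is a bound on how large $\phi$ can be relative to the function values: on $B$ we have $0 \le \phi(x) \le \frac{\Lambda^+}{2} r^2 = \frac{\Lambda^+}{2}(\diam(B)/2)^2 = \frac{\Lambda^+ \diam(B)^2}{8}$, and by the choice of $\Lambda^+$ and fact (i) this is at most a constant fraction (on the order of $\frac{1}{32}$, times the slack from (i)) of $f_i^-(B) \le \mathcal{F}_{\min}(x)$. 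Thus $\phi(x) = O(\mathcal{F}_{\min}(x))$ pointwise on $B$ — the offset never exceeds a constant multiple of the quantity we are approximating. I would then estimate gradients: $\|\Gradient \phi(x)\| = \Lambda^+\|x-c\| \le \Lambda^+ r$, which is again $O(f_i^+(B)/\diam(B))$, comparable to the gradient bound (ii). So $\conc{F}_{\min}$ has function values and gradients that, after dividing through by a suitable scalar $M = \Theta(\max_i f_i^+(B))$ and rescaling $B$ to a unit ball, all lie in $[0,1]$ with gradient norm at most $1$ — the precise hypotheses of Lemma~\ref{l:ray-shoot-vertical}.

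Then I would invoke Lemma~\ref{l:ray-shoot-vertical} on the normalized family $\{\conc{f}_i/M\}$ over the unit ball to build a structure answering absolute $(\eps')$-AVR queries in time $O(\log(1/\eps'))$ and space $O((1/\eps')^{d/2})$. To answer a relative $\eps$-AVR query for the original $\mathcal{F}_{\min}$ at a query point $q \in B$: observe that an absolute error of $\eps'$ in $\conc{f}_i/M$ corresponds to an absolute error of $M\eps'$ in $\conc{f}_i$, which (since the offset $\phi$ is common to all functions and exactly cancels in differences) is an absolute error of $M\eps'$ in $f_i$ itself. Since $\mathcal{F}_{\min}(q) = \Theta(M)$ — here I use fact (i) again, which forces $\mathcal{F}_{\min}(q) \ge f_i^-(B) \ge \frac12 f_i^+(B) = \Omega(M)$ — choosing $\eps' = \Theta(\eps)$ turns this absolute error into a relative $\eps$-error with respect to $\mathcal{F}_{\min}(q)$. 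The witness index returned is valid because convexification preserves which function attains the minimum. The query time and storage bounds follow with the claimed asymptotics.

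The main obstacle, and the step deserving the most care, is the chain of constant-factor bookkeeping that ties together: the separation parameter $2\tau$, the resulting $\kappa = 2$ in Lemma~\ref{l:admiss-util}, the choice of $\Lambda^+$, the resulting magnitude of $\phi$ relative to $\mathcal{F}_{\min}$, and the final conversion between absolute and relative error. One must verify that the offset $\phi$ is \emph{strictly} a constant fraction (bounded away from $1$) of $\mathcal{F}_{\min}$ uniformly over $B$ — otherwise the reduction from absolute to relative error loses a factor that is not $O(1)$ — and that the normalizing scalar $M$ can be chosen knowing only the $f_i^+(B)$ values (which are available at preprocessing time). A secondary technical point is handling the rescaling of $B$ to the unit ball and tracking how this affects gradient norms (it multiplies them by $\radius(B)$); this is the content alluded to in the phrase ``after a suitable scaling'' in Section~\ref{s:example}, and is a routine but necessary computation.
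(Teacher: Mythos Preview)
Your overall strategy---normalize, convexify, then invoke Lemma~\ref{l:ray-shoot-vertical} and convert absolute back to relative error---is exactly the paper's approach. But there is a genuine gap in your constant-factor bookkeeping, precisely at the place you flagged as ``the main obstacle.''

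The problem is that Lemma~\ref{l:admiss-util}(i) only compares $f_i^+(B)$ to $f_i^-(B)$ for the \emph{same} index $i$; it says nothing about how the values of different functions $f_i$ and $f_j$ relate to one another. Consequently your claim that ``$\Lambda^+$ is within a constant factor of $f_i^+(B)/\diam(B)^2$ for every $i$ simultaneously'' is unjustified: some $f_j$ could take values on $B$ that are arbitrarily large compared to the function achieving $\mathcal{F}_{\min}$. With $\Lambda^+ = \max_i f_i^+(B)/(2\diam(B))^2$ and $M = \Theta(\max_i f_i^+(B))$, both the offset $\phi$ and the normalizer $M$ are then driven by this large $f_j$, while $\mathcal{F}_{\min}(q)$ is governed by the small one. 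Your chain ``$\mathcal{F}_{\min}(q) \ge f_i^-(B) \ge \tfrac12 f_i^+(B) = \Omega(M)$'' breaks exactly here: the first inequality holds only for the index attaining the minimum, and for that index there is no reason $f_i^+(B)$ should be comparable to $\max_j f_j^+(B)$.

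The paper closes this gap with a simple pruning step that you omitted. Let $f_1$ be the function attaining the overall minimum value $f^-(B)$ on $B$; by Lemma~\ref{l:admiss-util}(i), $f_1^+(B) \le 2 f_1^-(B)$. Any $f_i$ with $f_i^-(B) > 2 f_1^-(B)$ then satisfies $f_i > f_1$ everywhere on $B$ and so cannot contribute to $\mathcal{F}_{\min}$; discard it. After pruning, every surviving $f_i$ satisfies $f_i^-(B) \le 2 f_1^-(B)$ and hence $f_i^+(B) \le 4 f_1^-(B)$, so all function values on $B$ lie in $[f_1^-(B),\,4 f_1^-(B)]$. Now the normalizer (the paper takes $h = 5 f_1^-(B)$) really is within a constant factor of $\mathcal{F}_{\min}(q)$ for every $q \in B$, and the rest of your argument goes through exactly as you outlined.
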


\begin{proof} 
We will answer approximate vertical ray-shooting queries by a reduction to the data structure given in Lemma~\ref{l:ray-shoot-vertical}. In order to apply this lemma, we need to transform the problem into the canonical form prescribed by that lemma.

We may assume without loss of generality that $f_1$ is the function that defines the value of $f^-(B)$ among all the functions in $\mathcal{F}$. By Lemma~\ref{l:admiss-util}(i) (with $\kappa = 2$), $f_1^+(B) \leq 2 f_1^-(B)$. For all $i$, we may assume that $f_i^-(B) \leq 2 f_1^-(B)$ for otherwise this function is greater than $f_1$ throughout $B$, and hence it does not contribute to $\mathcal{F}_{\min}$. Under this assumption, it follows that $f_i^+(B) \leq 4 f_1^-(B)$.

In order to convert these functions into the desired form, define $h = 5 f_1^-(B)$, $r = \radius(B)$, and let $c \in \RE^d$ denote the center of $B$. Let $B_0$ be a unit ball centered at the origin, and for any $x \in B_0$, let $x' = r x + c$. Observe that $x \in B_0$ if and only if $x' \in B$. For each $i$, define the normalized distance function
\[
    g_i(x) 
        ~ = ~ \frac{f_i(x')}{h}.
\]
We assert that the normalized functions satisfy the following three properties. 
\begin{enumerate}
\item[(a)] $g_i^+(B_0) \leq 4/5$ and $g_i^-(B_0) \geq 1/5$

\item[(b)] $\|\Gradient g_i^+(B_0)\| \leq 1/2$

\item[(c)] $\|\Hess g_i^+(B_0)\| \leq 1/4$
\end{enumerate}
To establish (a), observe that for any $x \in B_0$,
\[
    g(x) 
        ~ \leq ~ \frac{f^+(B)}{h}
        ~ \leq ~ \frac{2 f^-(B)}{h}
        ~ \leq ~ \frac{4 f_1^-(B)}{h}
        ~  =   ~ \frac{4}{5}
    \quad\text{and}\quad
    g(x) 
        ~ \geq ~ \frac{f^-(B)}{h} 
        ~ \geq ~ \frac{f_1^-(B)}{h}
        ~  =   ~ \frac{1}{5}.
\]

In order to prove (b) and (c), observe that by the chain rule in differential calculus, $\Gradient g(x) = (r/h) \Gradient f(x')$ and $\Hess g(x) = (r^2/h) \Hess f(x')$. Since $B_0$ is a unit ball, $\diam(B_0) = 2$. Thus, by Lemma~\ref{l:admiss-util}(ii) (with $\kappa = 2$), we have 
\[
    \|\Gradient g(x)\|
        ~  =   ~ \frac{r}{h} \|\Gradient f(x')\|
        ~ \leq ~ \frac{r}{h} \frac{f^+(B)}{2(2r)}
        ~ \leq ~ \frac{1}{4},
\]
which establishes~(b). By Lemma~\ref{l:admiss-util}(iii),
\[
    \|\Hess g(x)\|
        ~  =   ~ \frac{r^2}{h} \|\Hess f(x')\|
        ~ \leq ~ \frac{r^2}{h} \frac{f^+(B)}{(2(2r))^2}
        ~ \leq ~ \frac{1}{16},
\]
which establishes~(c).

We now proceed to convexify these functions. To do this, define $\phi(x) = (1 - \|x\|^2)/8$. Observe that for any $x \in B_0$, $\phi(x) \in [0,1/8]$ and $\|\Gradient \phi(x)\| = \|x\|/4$ and $\Hess \phi(x)$ is the diagonal matrix $-(1/4)I$. 
Define
\[
    \conc{g}_i(x)
        ~ = ~ g_i(x) + \phi(x).
\]
It is easily verified that these functions satisfy the following properties.
\begin{enumerate}
\item[(a$'$)] $\conc{g}_i^{\kern+1pt +}(B_0) \leq 1$ and $\conc{g}_i^{\kern+1pt -}(B_0) \geq 1/5$

\item[(b$'$)] $\|\Gradient \conc{g}_i^{\kern+1pt +}(B_0)\| ~\leq~ \|\Gradient g_i^+(B_0)\| + \|\Gradient \phi^+(B_0)\| ~<~ 1$

\item[(c$'$)] $\|\Hess \conc{g}_i^{\kern+1pt +}(B_0)\| ~\leq~ \|\Hess g_i^+(B_0)\| - (1/4) ~\leq~ 0$
\end{enumerate}

By property (c$'$), these functions are concave over $B_0$. Given that $\conc{g}_i^{\kern+1pt -}(B_0) \geq 1/5$, in order to answer AVR queries to a relative error of $\eps$, it suffices to answer AVR queries to an absolute error of $\eps' = \eps/5$. Therefore, we can apply Lemma~\ref{l:ray-shoot-vertical} (using $\eps'$ in place of $\eps$) to obtain a data structure that answers relative $\eps$-AVR queries with respect to $\mathcal{F}_{\min}$ in time $O\big( \log \inv{\eps} \big)$ and storage $O\big( (\inv{\eps})^{d/2} \big)$, as desired.
\end{proof}

Armed with this tool, we are now in a position to describe the data structures for answering $\eps$-ANN queries for each of our applications, which we present in the subsequent sections.

\section{Answering ANN Queries for Scaling Distance Functions} \label{s:scaling-dist}

Recall that in a scaling distance we are given a convex body $K$ that contains the origin in its interior, and the distance from a query point $q$ to a site $p$ is defined to be zero if $p = q$ and otherwise it is the smallest $r$ such that $(q-p)/r \in K$.%
\footnote{This can be readily generalized to squared distances, that is, the smallest $r$ such that $(q-p)/\sqrt{r} \in K$. A relative error of $1+\eps$ in the squared distance, reduces to computing a $\sqrt{1+\eps}$ relative error in the original distance. Since $\sqrt{1+\eps} \approx (1+\eps/2)$ for small $\eps$, our approach can be applied but with a slightly smaller value of $\eps$. This generalizes to any constant power.} 
The body $K$ plays the role of a unit ball in a normed metric, but we do not require that the body be centrally symmetric. In this section we establish Theorem~\ref{thm:convexdf} by demonstrating a data structure for answering $\eps$-ANN queries given a set $S$ of $n$ sites, where each site $p_i$ is associated with a scaling distance whose unit ball is a fat, smooth convex body.

Before presenting the data structure, we present two preliminary results. The first, given in Section~\ref{s:avd-sep}, explains how to subdivide space into a number of regions, called \emph{cells}, that possess nice separation properties with respect to the sites. The second, given in Section~\ref{s:scaling-admiss}, presents key technical properties of scaling functions whose unit balls are fat and smooth.

\subsection{AVD and Separation Properties} \label{s:avd-sep}

In order to apply the convexification process, we will first subdivide space into regions, each of which satisfies certain separation properties with respect to the sites $S$. This subdivision results from a height-balanced variant of a quadtree, called a \emph{balanced box decomposition tree} (or BBD tree)~\cite{ArM00}. Each cell of this decomposition is either a quadtree box or the set-theoretic difference of two such boxes. Each leaf cell is associated with an auxiliary ANN data structure for the query points in the cell, and together the leaf cells subdivide all of $\RE^d$. 

The separation properties are essentially the same as those of the AVD data structure of \cite{AMM09a}. For any leaf cell $w$ of the decomposition, the sites can be partitioned into three subsets, any of which may be empty (see Figure~\ref{f:separation}(a)). First, a single site may lie within $w$. Second, a subset of sites, called the \emph{outer cluster}, is well-separated from the cell. Finally, there may be a dense cluster of points, called the \emph{inner cluster}, that lie within a ball $B_w$ that is well-separated from the cell. After locating the leaf cell containing the query point, the approximate nearest neighbor is computed independently for each of these subsets (by a method to be described later), and the overall closest is returned. The next lemma formalizes these separation properties. It follows easily from Lemma~{6.1} in~\cite{AMM09a}. Given a BBD-tree cell $w$ and a point $p \in \RE^d$, let $\dist(p,w)$ denote the minimum Euclidean distance from $p$ to any point in $w$.

\begin{figure*}[htbp]
  \centerline{\includegraphics[scale=0.40]{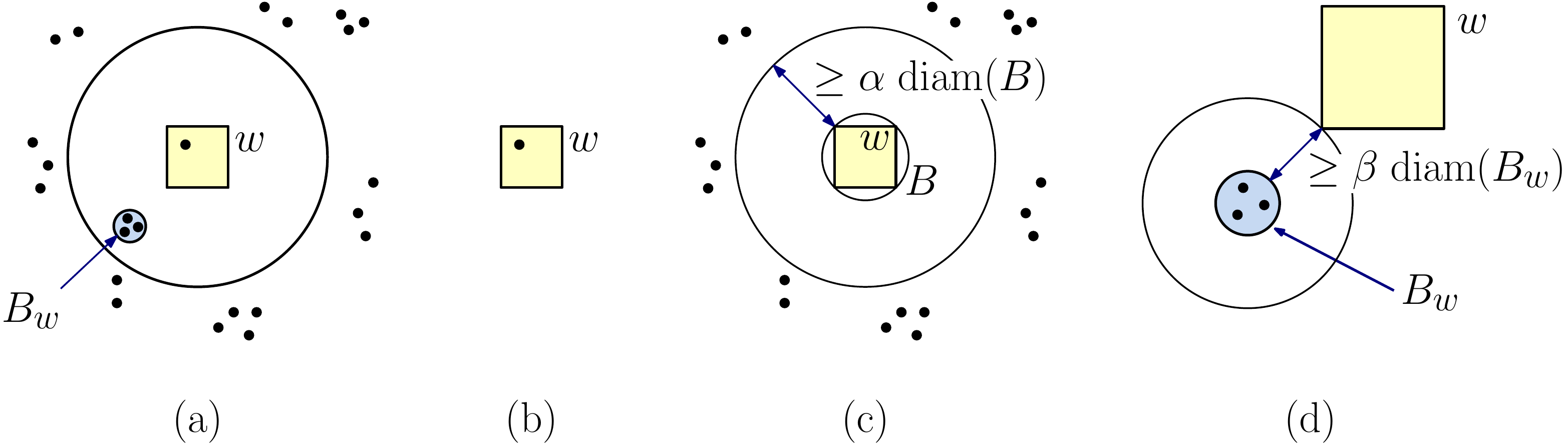}}
  \caption{Basic separation properties for Lemma~\ref{l:separation}.}
  \label{f:separation}
\end{figure*}

\begin{lemma}[Basic Separation Properties]
\label{l:separation}
Given a set $S$ of $n$ points in $\RE^d$ and real parameters $\alpha, \beta \ge 2$. It is possible to construct a BBD tree $T$ with $O(\alpha^d n \log \beta)$ nodes, whose leaf cells cover $\RE^d$ and for every leaf cell $w$, at least one of the following holds:
\begin{enumerate}
\item[$(i)$] Letting $p$ denote the closest site to the center of $w$, $p$ lies within $w$ and it is the only such site (see Figure~\ref{f:separation}(b)).

\item[$(ii)$] \emph{(outer cluster)} Letting $B$ denote the smallest Euclidean ball enclosing $w$, $\dist(p, B) \ge \alpha \cdot \diam(B)$ (see Figure~\ref{f:separation}(c)).

\item[$(iii)$] \emph{(inner cluster)} There exists a ball $B_w$ associated with $w$ such that $\dist(B_w, w) \ge \beta \cdot \diam(B_w)$ and $p \in B_w$ (see Figure~\ref{f:separation}(d)).
\end{enumerate}

Furthermore, it is possible to compute the tree $T$ in total time $O(\alpha^d n \log n \kern+1pt \log \beta)$, and the leaf cell containing a query point can be located in time $O(\log (\alpha n) + \log \log \beta)$.
\end{lemma}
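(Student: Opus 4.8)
The plan is to derive this essentially as a corollary of the AVD construction of~\cite{AMM09a}, specifically the separation properties guaranteed by Lemma~6.1 there, and then repackage the output in the BBD-tree form stated above. First I would recall the BBD-tree construction: starting from a quadtree-like box decomposition, one applies the standard \emph{centroid-shrinking} and \emph{splitting} operations of~\cite{ArM00} to obtain a height-balanced tree of depth $O(\log n)$ in which every cell is either a quadtree box or the set-theoretic difference of two nested boxes, and such that the number of points associated with a cell drops by a constant fraction every constant number of levels. The parameters $\alpha$ and $\beta$ enter through how finely we refine the decomposition near clusters of sites: refining until a cell is ``small enough'' relative to its nearest site costs $O(\log\beta)$ extra levels for the inner-cluster case and an extra factor of $O(\alpha^d)$ cells to capture the outer-cluster separation, which is where the $O(\alpha^d n\log\beta)$ node bound comes from.

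Next I would argue the trichotomy for a fixed leaf cell $w$. Let $p$ be the site closest to the center of $w$ and let $B$ be the smallest Euclidean ball enclosing $w$ (so $\diam(B)=\diam(w)$ up to a constant). Case~$(i)$: if $p\in w$, then by continuing to split $w$ until it contains at most one site — which the construction does — we may assume $p$ is the unique site in $w$; this is exactly the ``one site inside'' leaf. Otherwise $p\notin w$, and we distinguish according to the \emph{spread} of the sites as seen from $w$. If the sites that are ``relevant'' to $w$ (i.e.\ those that could be an approximate nearest neighbor for some query in $w$) are all far from $w$, then taking $B$ as above, refining $w$ until $\dist(p,B)\ge\alpha\cdot\diam(B)$ gives case~$(ii)$; the number of refinement steps needed is $O(\log\alpha)$ per relevant site, absorbed into the node count. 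If instead there is a tight cluster of sites near $w$ — so that no amount of bounded refinement separates them from $w$ by the factor $\alpha$ — then Lemma~6.1 of~\cite{AMM09a} guarantees the existence of a small ball $B_w$ containing all such nearby sites (in particular $p$) with $\dist(B_w,w)\ge\beta\cdot\diam(B_w)$; this is case~$(iii)$, and the $\log\beta$ term in the node count reflects the depth needed to certify this inner-cluster separation. The three cases are exhaustive because a site is either inside $w$, or outside and far, or outside and part of a bounded-diameter nearby cluster.

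Finally I would account for the complexity bounds. The tree has $O(\alpha^d n\log\beta)$ nodes as argued, and the depth is $O(\log n + \log\alpha + \log\log\beta)$ by the height-balancing property of BBD trees together with the $O(\log\alpha)$ and $O(\log\beta)$ extra refinement levels; this gives point location in time $O(\log(\alpha n) + \log\log\beta)$ via the standard descent. The construction time $O(\alpha^d n\log n\,\log\beta)$ follows because each of the $O(\alpha^d n\log\beta)$ nodes is produced by a constant amount of work plus the $O(\log n)$ overhead of maintaining the auxiliary structures (the midpoint/box data and the point counts) during the recursive subdivision, exactly as in~\cite{ArM00,AMM09a}. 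The main obstacle — and the only place real care is needed — is verifying that the separation guarantees of~\cite{AMM09a}, which are phrased in terms of the AVD cells and a single separation parameter, translate faithfully into the two-parameter form with the inner-cluster ball $B_w$; I would handle this by citing Lemma~6.1 of~\cite{AMM09a} directly and checking that their ``separation'' and ``cluster'' cells map onto our cases~$(ii)$ and~$(iii)$ respectively, with the numerical constants rescaled by the constants hidden in $\diam(B)$ versus $\diam(w)$.
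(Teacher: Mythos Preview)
Your proposal is correct and takes essentially the same approach as the paper: both derive the lemma directly from Lemma~6.1 of~\cite{AMM09a}. In fact the paper does not give a proof at all beyond the sentence ``It follows easily from Lemma~6.1 in~\cite{AMM09a},'' so your sketch is considerably more detailed than what the authors provide.
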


\subsection{Admissibility for Scaling Distances} \label{s:scaling-admiss}

In this section we explore how properties of the unit ball affect the effectiveness of convexification. Recall from Section~\ref{s:convexify} that convexification relies on the admissibility of the distance function, and we show here that this will be guaranteed if unit balls are fat, well centered, and smooth.

Given a convex body $K$ and a parameter $0 < \gamma \leq 1$, we say that $K$ is \emph{centrally $\gamma$-fat} if there exist Euclidean balls $B$ and $B'$ centered at the origin, such that $B \subseteq K \subseteq B'$, and $\radius(B) / \radius(B') \ge \gamma$. Given a parameter $0 < \sigma \leq 1$, we say that $K$ is \emph{$\sigma$-smooth} if for every point $x$ on the boundary of $K$, there exists a closed Euclidean ball of diameter $\sigma \cdot \diam(K)$ that lies within $K$ and has $x$ on its boundary. We say that a scaling distance function is a \emph{$(\gamma,\sigma)$-distance} if its associated unit ball $B$ is both centrally $\gamma$-fat and $\sigma$-smooth.

In order to employ convexification for scaling distances, it will be useful to show that smoothness and fatness imply that the associated distance functions are admissible. This is encapsulated in the following lemma. It follows from a straightforward but rather technical exercise in multivariate differential calculus. We include a complete proof in Section~\ref{s:deferred}.

\begin{restatable}{lemma}{scalingadmiss}
\label{l:scaling-admiss}
Given positive reals $\gamma$ and $\sigma$, let $f_p$ be a $(\gamma,\sigma)$-distance over $\RE^d$ scaled about some point $p \in \RE^d$. There exists $\tau$ (a function of $\gamma$ and $\sigma$) such that $f_p$ is $\tau$-admissible.
\end{restatable}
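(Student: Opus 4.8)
The plan is to unpack the definitions of $\tau$-admissibility and verify conditions $(i)$ and $(ii)$ directly by expressing the gradient and Hessian of $f_p$ in terms of the geometry of the unit ball $K = B$. Since the distance function is translation-covariant (replacing $p$ by the origin just shifts the domain), I would work with the gauge function $g(x) = f_0(x)$, the smallest $r$ with $x/r \in K$, so that $f_p(x) = g(x-p)$ and $\Gradient f_p(x) = \Gradient g(x-p)$, $\Hess f_p(x) = \Hess g(x-p)$, with $\|x-p\|$ playing the role of $\|y\|$ where $y = x-p$. The key structural fact is that $g$ is positively homogeneous of degree $1$: $g(ty) = t\,g(y)$ for $t > 0$. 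Differentiating this identity in $t$ and in $y$ gives the Euler relations $\ang{\Gradient g(y), y} = g(y)$ and $\Hess g(y)\, y = 0$; moreover $\Gradient g$ is homogeneous of degree $0$ and $\Hess g$ is homogeneous of degree $-1$. These already suggest the right scaling: $\|\Gradient g(y)\| = \|\Gradient g(y/\|y\|)\|$ depends only on the direction, and $\|\Hess g(y)\| = \|\Hess g(y/\|y\|)\| / \|y\|$. So condition $(i)$, which asks $\|\Gradient g(y)\|\,\|y\| \le \tau g(y)$, reduces to showing $\|\Gradient g(u)\| / g(u) \le \tau$ for unit vectors $u$, and condition $(ii)$, asking $\|\Hess g(y)\|\,\|y\|^2 \le \tau^2 g(y)$, reduces to $\|\Hess g(u)\| / g(u) \le \tau^2$ for unit vectors $u$ — i.e., everything boils down to bounds on a sphere.

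Next I would convert these sphere bounds into statements about $K$'s boundary. By central $\gamma$-fatness, $B_\gamma \subseteq K \subseteq B_1$ (Euclidean balls about the origin of radii roughly $\gamma$ and $1$ after normalizing $\diam K$), so for a unit vector $u$ we have $g(u) \in [\text{const},\text{const}/\gamma]$; in particular $g$ is bounded above and below by positive constants depending only on $\gamma$, which handles the denominators. For the gradient: $\Gradient g(y)$ is (up to normalization) the outer unit normal to $\bd K$ at the boundary point $x = y/g(y)$, scaled by $1/h$ where $h$ is the support-function-type quantity $h = \ang{\nu_x, x}$ with $\nu_x$ the unit normal. Smoothness enters here: a $\sigma$-smooth body has an interior ball of radius $\sigma\cdot\diam(K)/2$ tangent at every boundary point, which forces the distance from the origin to every supporting hyperplane of $K$ to be at least $\sigma\cdot\diam(K)/2$ — combined with fatness (so the origin is not near $\bd K$), this gives $h \ge$ const, hence $\|\Gradient g(u)\| \le$ const, yielding $(i)$. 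For the Hessian: on a smooth convex body the principal curvatures of $\bd K$ are bounded above by $1/(\sigma\cdot\diam(K)/2)$ (the rolling interior ball caps the curvature), and $\Hess g$ restricted to the tangent directions is essentially (curvature of $\bd K$)$/h$ — with the radial direction contributing nothing by the Euler relation $\Hess g\cdot y = 0$. Bounding curvature by $O(1/\sigma)$ and $h$ below by a $\gamma$-dependent constant gives $\|\Hess g(u)\| = O(1/(\sigma\gamma))$, say, which yields $(ii)$ after choosing $\tau$ large enough (a polynomial in $1/\gamma$ and $1/\sigma$) to dominate both bounds.

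The main obstacle is the Hessian computation: turning "$\sigma$-smoothness" into a clean, quantitative upper bound on the operator norm of $\Hess g$ at boundary directions. This requires differentiating the gauge twice, which is cleanest done via the implicit function theorem applied to the boundary equation, or via a local graph parametrization of $\bd K$ at $x$; one then has to relate the second fundamental form of $\bd K$ (controlled by the rolling interior ball of radius $\ge \sigma\cdot\diam(K)/2$) to $\Hess g$, keeping track of the factor $1/h$ and of how the tangent-plane geometry interacts with the radial direction. One must also confirm that $g$ is genuinely twice differentiable on $\RE^d \setminus \{0\}$ — which follows because $\sigma$-smoothness makes $\bd K$ a $C^{1,1}$ hypersurface with a well-defined (a.e., and here everywhere by the rolling-ball condition) second fundamental form; strictly speaking $\Hess g$ may only exist almost everywhere, but the operator-norm bound holds wherever it is defined, which (per the footnote in Section~\ref{s:admissibility}) suffices for the convexification application. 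Everything else — the homogeneity reductions, the fatness-based bounds on $g$ and $h$ — is routine once the curvature bound is in hand, so I would present the homogeneity reduction cleanly first and isolate the curvature-to-Hessian estimate as the technical heart of the argument.
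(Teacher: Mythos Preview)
Your approach is correct and structurally parallel to the paper's: both exploit the degree-$1$ homogeneity of the gauge (the paper phrases this as the radial direction being an eigenvector of $\Hess f_p$ with eigenvalue $0$), both bound the gradient via the quantity $h = \ang{\nu,z}$ (equivalently, the paper's $\ang{r,n} \ge \gamma$), and both control the Hessian through the inscribed rolling ball guaranteed by $\sigma$-smoothness. The execution differs mainly in the Hessian step. You invoke the second fundamental form and the identity $t^{\Transpose}\,\Hess g\,t = \kappa_t/h$ for tangent directions $t$, together with the curvature cap $\kappa_t \le 2/(\sigma\,\diam K)$ from the rolling ball; the paper instead computes the second directional derivative along a tangent direction from first principles and bounds the key limit $\lim_{\delta\to 0}\|x_\delta - y_\delta\|/\delta^2$ via the power-of-a-point theorem applied to the inscribed ball, an elementary substitute for the curvature bound. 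Both routes then pay an extra factor of $\gamma^{-2}$ to pass from tangent directions to directions orthogonal to the radial one, and the paper arrives at the explicit constant $\tau = \sqrt{2/(\sigma\gamma^3)}$. One small correction to your gradient argument: smoothness is not what bounds $h$ below (the rolling ball need not contain the origin); central $\gamma$-fatness alone already gives $h \ge r_-$, since the origin-centered inner ball lies inside every supporting halfspace, and this is exactly the $1/\gamma$ bound the paper obtains for condition~$(i)$.
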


Our results on $\eps$-ANN queries for scaling distances will be proved for any set of sites whose associated distance functions (which may be individual to each site) are all $(\gamma,\sigma)$-distances for fixed $\gamma$ and $\sigma$. Our results on the Minkowski and Mahalanobis distances thus arise as direct consequences of the following easy observations.

\begin{lemma} ~
\label{l:gamma-sigma}
\begin{itemize}
\item[$(i)$] For any positive real $k > 1$, the Minkowski distance $\ell_k$ in $\RE^d$ is a $(\gamma,\sigma)$-distance, where $\gamma$ and $\sigma$ are functions of $k$ and $d$. 

This applies to multiplicatively weighted Minkowski distances as well.
     
\item[$(ii)$] The Mahalanobis distance defined by a matrix $M_p$ in $\RE^d$ is a $(\gamma,\sigma)$-distance, where $\gamma$ and $\sigma$ are functions of $M_p$'s minimum and maximum eigenvalues.
\end{itemize} 
\end{lemma}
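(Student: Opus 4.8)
The plan is, for each of the two distance families, to identify the associated unit ball and verify directly that it is centrally $\gamma$-fat and $\sigma$-smooth for constants of the asserted form; Lemma~\ref{l:gamma-sigma} then follows immediately from the definition of a $(\gamma,\sigma)$-distance. The unit ball of $\ell_k$ is $B_k = \{v \in \RE^d : \|v\|_k \le 1\}$; the unit ball of the weighted distance $w\|\cdot\|_k$ is $\tfrac{1}{w} B_k$; and the unit ball of the Mahalanobis distance $f_p(q) = \sqrt{(q-p)^\Transpose M_p (q-p)}$ is the ellipsoid $E_p = \{v : v^\Transpose M_p v \le 1\}$. Since central $\gamma$-fatness and $\sigma$-smoothness are both invariant under scaling a body about the origin, the multiplicatively weighted case in $(i)$ reduces at once to the unweighted one, so it suffices to treat $B_k$ and $E_p$.

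\emph{Fatness.} For $B_k$ I would invoke the standard norm comparison $d^{-|1/2 - 1/k|}\,\|v\|_2 \le \|v\|_k \le d^{\,|1/2 - 1/k|}\,\|v\|_2$, which sandwiches $B_k$ between the origin-centred Euclidean balls of radii $d^{-|1/2-1/k|}$ and $d^{\,|1/2-1/k|}$; hence $B_k$ is centrally $\gamma$-fat with $\gamma = d^{-|1/2-1/k|}$, a function of $k$ and $d$. For $E_p$, if $\lambda_1 \ge \cdots \ge \lambda_d > 0$ are the eigenvalues of $M_p$, the principal semi-axes of $E_p$ have lengths $\lambda_i^{-1/2}$, so $E_p$ lies between origin-centred balls of radii $\lambda_1^{-1/2}$ and $\lambda_d^{-1/2}$ and is centrally $\gamma$-fat with $\gamma = \sqrt{\lambda_d / \lambda_1}$.

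\emph{Smoothness.} This is where the work lies. For a convex body with a $C^2$ boundary, having an internally tangent Euclidean ball of radius $\rho$ through every boundary point is equivalent to a uniform upper bound $1/\rho$ on the principal curvatures of the boundary (the eigenvalues of its second fundamental form), provided the body is large enough to contain such a ball; one then takes $\sigma$ to be a constant slightly below twice that radius divided by the body's diameter. For $E_p$ this is a routine ellipsoid computation: writing $a_1 \ge \cdots \ge a_d$ for the semi-axis lengths (so $a_1 = \lambda_d^{-1/2}$), the largest principal curvature over the whole surface is $a_1/a_d^2$, attained at the ends of the longest axis, the diameter is $2 a_1$, and an ellipsoid rolls freely on the ball of radius $a_d^2/a_1$, giving $\sigma = (a_d/a_1)^2 = \lambda_d/\lambda_1$, a function of the extreme eigenvalues as claimed. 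For $B_k$ the same scheme goes through when $k \ge 2$: parametrizing $\partial B_k$ locally as the graph $x_d = (1 - \sum_{j < d}|x_j|^k)^{1/k}$ and differentiating twice, the principal curvatures are bounded above by a constant depending only on $k$ and $d$ (indeed they vanish where $\partial B_k$ meets a coordinate axis), so $B_k$ is $\sigma$-smooth with $\sigma = \sigma(k,d)$.

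The step I expect to be the main obstacle is the smoothness of $B_k$ in the range $1 < k < 2$. There the second fundamental form of $\partial B_k$ is \emph{not} uniformly bounded: as a coordinate tends to zero --- that is, as one approaches a point where $\partial B_k$ crosses a coordinate axis --- the curvatures transverse to that axis blow up, so the naive parametrization argument breaks down exactly there and a careful localized analysis near those axis points is required (this is also the mechanism that forces $\sigma$ to degrade as $k \downarrow 1$, consistent with the $\ell_1$ ball being a cross-polytope with sharp vertices). Supplying that degenerate-regime argument cleanly is the crux that the deferred proof in Section~\ref{s:deferred} must provide; everything else --- the fatness estimates and the whole Mahalanobis case --- is routine multivariate calculus.
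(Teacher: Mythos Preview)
The paper does not actually prove this lemma: it is introduced as ``easy observations'' and no proof appears, either inline or in Section~\ref{s:deferred}. So there is nothing to compare your approach against. Your treatment of fatness in both cases, and of smoothness for the Mahalanobis ellipsoid and for $B_k$ with $k \ge 2$, is correct and entirely standard.

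Your instinct about the range $1 < k < 2$ is right, but your expectation is wrong: there is no ``careful localized analysis'' that rescues the claim, because the claim is false in that range. At the axis point $e_1 = (1,0,\ldots,0)$ the boundary of $B_k$ satisfies $1 - x_1 \sim \tfrac{1}{k}\sum_{j>1}|x_j|^k$, whereas any internally tangent Euclidean ball of radius $r$ would give $1 - x_1 \sim \tfrac{1}{2r}\sum_{j>1} x_j^2$; for $k < 2$ the former dominates the latter near $e_1$ for every $r > 0$, so no inscribed tangent ball of positive radius exists and $B_k$ is not $\sigma$-smooth for any $\sigma > 0$. The same computation you set up for the second fundamental form shows the principal curvatures diverge there, not merely become large. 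Moreover the downstream use of smoothness (Lemma~\ref{l:scaling-admiss}, yielding $\tau$-admissibility) fails independently: the Hessian of $\|x\|_k$ contains the term $(k-1)|x_i|^{k-2}\|x\|_k^{1-k}$, which blows up as $x_i \to 0$ when $k < 2$, so admissibility condition~(ii) cannot hold uniformly either. The deferred proof you are waiting for is not in the paper, and as stated part~(i) should be read as holding only for $k \ge 2$.
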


\subsection{ANN Data Structure for Scaling Functions} \label{s:scaling-data-struct}

Let us return to the discussion of how to answer $\eps$-ANN queries for a family of $(\gamma,\sigma)$-distance functions. By Lemma~\ref{l:scaling-admiss}, such functions are $\tau$-admissible, where $\tau$ depends only on $\gamma$ and $\sigma$. 

We begin by building an $(\alpha,\beta)$-AVD over $\RE^d$ by invoking Lemma~\ref{l:separation} for $\alpha = 2 \tau$ and $\beta = 10 \tau/\eps$. (These choices will be justified below.) For each leaf cell $w$, the nearest neighbor of any query point $q \in w$ can arise from one of the three cases in the lemma. Case~(i) is trivial since there is just one point. 

Case~(ii) (\emph{outer cluster}) can be solved easily by reduction to Lemma~\ref{l:ray-shoot-general}. Recall that we have a BBD-tree leaf cell $w$, and the objective is to compute an $\eps$-ANN from among the points of the outer cluster, that is, a set whose sites are at Euclidean distance at least $\alpha \cdot \diam(w)$ from $w$. Let $B$ denote the smallest Euclidean ball enclosing $w$, and let $\mathcal{F}$ be the family of distance functions associated with the sites of the outer cluster. Since $\alpha = 2 \tau$, $B$ is $(2 \tau)$-separated from the points of the outer cluster. By Lemma~\ref{l:ray-shoot-general}, we can answer $\eps$-AVR queries with respect to $\mathcal{F}_{\min}$, and this is equivalent to answering $\eps$-ANN queries with respect to the outer cluster. The query time is $O(\log \inv{\eps})$ and the storage is $O((\inv{\eps})^{d/2})$.

All that remains is case~(iii) (\emph{inner cluster}). Recall that these sites lie within a ball $B_w$ such that $\dist(B_w, w) \ge \beta \cdot \diam(B_w)$. In approximate Euclidean nearest-neighbor searching, a separation as large as $\beta$ would allow us to replace all the points of $B_w$ with a single representative site, but this is not applicable when different sites are associated with different scaling distance functions. We will show instead that queries can be answered by partitioning the query space into a small number of regions such that Lemma~\ref{l:ray-shoot-general} can be applied to each region. Let $\{p_1, \ldots, p_m\}$ denote the sites lying within $B_w$, and let $\mathcal{F} = \{f_1, \ldots, f_m\}$ denote the associated family of $(\gamma,\sigma)$-distance functions. 

Let $p'$ be the center of $B_w$, and for $1 \leq i \leq m$, define the \emph{perturbed distance function} $f'_i(x) = f_i(x + p_i - p')$ to be the function that results by moving $p_i$ to $p'$ without altering the unit metric ball. Let $\mathcal{F'}$ denote the associated family of distance functions. Our next lemma shows that this perturbation does not significantly alter the relative function values.

\begin{lemma} \label{l:perturbation}
Let $p \in \RE^d$ be the site of a $\tau$-admissible distance function $f$. Let $B$ be a ball containing $p$ and let $x$ be a point that is $\beta$-separated from $B$ for $\beta \geq 2\tau$. Letting $p'$ denote $B$'s center, define $f'(x) = f(x + p - p')$. Then 
\[
    \frac{|f'(x) - f(x)|}{f(x)}
        ~ \leq ~ \frac{2\tau}{\beta}.
\]

\end{lemma}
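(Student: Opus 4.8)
## Proof Plan for Lemma~\ref{l:perturbation}

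The plan is to bound the difference $|f'(x) - f(x)| = |f(x + p - p') - f(x)|$ by a first-order (mean value theorem) argument, controlling the gradient of $f$ along the short segment joining $x$ and $x + (p - p')$, and then dividing by $f(x)$ to get the claimed relative bound. The key geometric facts are that the displacement vector $p - p'$ has length at most $\radius(B) \le \diam(B)/2$ (since both $p$ and $p'$ lie in $B$), while $x$ is far from $B$, namely $\dist(x, B) \ge \beta \cdot \diam(B)$; these together will show that every point on the segment stays comfortably far from $p$, so the admissibility bound $(i)$ on $\|\Gradient f\|$ can be applied uniformly.

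The steps, in order: First, by the mean value theorem applied to $t \mapsto f(x + t(p - p'))$ on $[0,1]$, there is a point $s$ on the segment $\overline{x,\, x + (p - p')}$ with $f'(x) - f(x) = \langle \Gradient f(s), p - p'\rangle$, hence $|f'(x) - f(x)| \le \|\Gradient f(s)\|\,\|p - p'\|$. Second, estimate $\|p - p'\| \le \radius(B) \le \diam(B)/2$. Third, estimate $\|s - p\|$ from below: since $s$ lies within distance $\|p - p'\| \le \diam(B)/2$ of $x$, and $\dist(x, B) \ge \beta\cdot\diam(B)$ while $p \in B$, we get $\|s - p\| \ge \|x - p\| - \diam(B)/2 \ge \beta\cdot\diam(B) - \diam(B)/2$; one also needs a lower bound on $\|x - p\|$ itself to compare with $f(x)$. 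Fourth, apply $\tau$-admissibility $(i)$ at the point $s$: $\|\Gradient f(s)\| \le \tau f(s)/\|s - p\|$. Fifth, relate $f(s)$ to $f(x)$ — since $f$ is a distance function and $s, x$ are both far from $p$ and close to each other, $f(s)$ and $f(x)$ differ by only a small factor; alternatively, absorb this by noting $f(s) \le f^+(B^*)$ for a small ball $B^*$ around $x$ and applying Lemma~\ref{l:admiss-util}(i)-style reasoning, or simply carry $f(s)$ through and bound $f(s)/f(x)$ by a constant close to $1$. Assembling these and using $\|x - p\| \le \|s - p\| + \diam(B)/2$ gives $|f'(x) - f(x)|/f(x) \lesssim \tau \cdot (\diam(B)/2)/\|x-p\| \le \tau/(2(\beta - 1/2)) \le 2\tau/\beta$ after the arithmetic, using $\beta \ge 2\tau \ge 2$.

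The main obstacle I anticipate is the bookkeeping in Step~Five: controlling the ratio $f(s)/f(x)$ (or equivalently passing from the gradient bound at the intermediate point $s$ back to a bound normalized by $f(x)$) cleanly enough that the constant comes out to exactly $2\tau/\beta$ rather than something messier. The natural route is to observe that $s$ and $x$ both lie in a tiny ball of radius $\diam(B)/2$ that is itself $(\beta - \tfrac12)$-separated (hence $(\tau\kappa)$-separated for $\kappa$ close to $\beta/\tau \ge 2$) from $p$, so Lemma~\ref{l:admiss-util}(i) bounds $f(s) \le f^+ \le \frac{\kappa}{\kappa-1} f^- \le \frac{\kappa}{\kappa-1} f(x)$, a factor that is at most $2$; threading this factor and the separation slack $\beta - 1/2$ versus $\beta$ through the inequalities while keeping the final bound at $2\tau/\beta$ is where care is needed, and it likely uses $\beta \ge 2\tau$ to kill the lower-order terms. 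The geometry (Steps~One through Four) is routine; the constant-chasing is the only real work.
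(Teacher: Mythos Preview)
Your plan is correct and follows essentially the same line as the paper's proof. The paper sidesteps the constant-chasing you flag in Step~Five by a small change of viewpoint: instead of bounding $f(s)/f(x)$ via $f$, it works entirely with $f'$. Let $B_x$ be the translate of $B$ centered at $x$; then both $x$ and $x' := x - (p - p')$ lie in $B_x$, and by definition $f(x) = f'(x')$. Since $f'$ is also $\tau$-admissible (same unit ball, site moved to $p'$) and $p'$, being the \emph{center} of $B$, is exactly $\beta$-separated from $B_x$ (not merely $(\beta-\tfrac12)$-separated), Lemma~\ref{l:admiss-util}(i) with $\kappa = \beta/\tau \ge 2$ gives ${f'}^{+}(B_x) \le (1 + 2\tau/\beta)\,{f'}^{-}(B_x)$, whence $|f'(x) - f(x)| = |f'(x) - f'(x')| \le {f'}^{+}(B_x) - {f'}^{-}(B_x) \le (2\tau/\beta)\,{f'}^{-}(B_x) \le (2\tau/\beta) f(x)$ immediately. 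Your route through $f$ sees only $(\beta - \tfrac12)$-separation from the site $p$, which is exactly the source of the extra arithmetic you anticipated; switching to $f'$ removes it and delivers the stated constant without needing $\tau \ge 1$.
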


\begin{proof}
Define $B_x$ to be the translate of $B$ whose center coincides with $x$. Since $p$ and $p'$ both lie within $B$, $x$ and $x + p - p'$ both lie within $B_x$. Let $\kappa = \beta/\tau$. Since $x$ and $B$ are $\beta$-separated, $p'$ and $B_x$ are also $\beta$-separated. Equivalently, they are $(\tau\kappa)$-separated. Because $\kappa \geq 2$, $\kappa/(\kappa-1) \leq (1 + 2/\kappa)$. Because $f'$ has the same unit metric ball as $f$, it is also $\tau$-admissible, and so by Lemma~\ref{l:admiss-util} 
\[
    {f'}^+(B_x) 
        ~ \leq ~ \frac{\kappa}{\kappa - 1} {f'}^-(B_x)  
        ~ \leq ~ \left(1 + \frac{2}{\kappa}\right) {f'}^-(B_x)
        ~   =  ~ \left(1 + \frac{2\tau}{\beta}\right) {f'}^-(B_x).
\]
Letting $x' = x - (p - p')$, we have $f(x) = f'(x')$. Clearly $x' \in B_x$. Let us assume that $f'(x) \geq f(x)$. (The other case is similar.) We have
\begin{align*}
    f'(x) - f(x)
        & ~   =  ~ f'(x) - f'(x')
          ~ \leq ~ {f'}^+(B_x) - {f'}^-(B_x) \\
        & ~ \leq ~ \frac{2\tau}{\beta} {f'}^-(B_x)
          ~ \leq ~ \frac{2\tau}{\beta} f'(x') 
          ~   =  ~ \frac{2\tau}{\beta} f(x),
\end{align*}
which implies the desired inequality.
\end{proof}

Since every point $x \in w$ is $\beta$-separated from $B_w$, by applying this perturbation to every function in $\mathcal{F}$, we alter relative errors by at most $2 \tau/\beta$. By selecting $\beta$ so that $(1 + 2\tau/\beta)^2 \leq 1 + \eps/2$, we assert that the total error is at most $\eps/2$. To see this, consider any query point $x$, and let $f_i$ be the function that achieves the minimum value for $\mathcal{F}_{\min}(x)$, and let $f'_j$ be the perturbed function that achieves the minimum value for $\mathcal{F'}_{\min}(x)$. Then
\begin{align*}
    f_j(x)
        & ~ \leq ~ \left(1 + \frac{2\tau}{\beta}\right) f'_j(x)
          ~ \leq ~ \left(1 + \frac{2\tau}{\beta}\right) f'_i(x) \\
        & ~ \leq ~ \left(1 + \frac{2\tau}{\beta}\right)^2 f_i(x)
          ~ \leq ~ \left(1 + \frac{\eps}{2}\right) f_i(x).
\end{align*}
It is easy to verify that for all sufficiently small $\eps$, our choice of $\beta = 10 \tau/\eps$ satisfies this condition (and it is also at least $2 \tau$ as required by the lemma).

We can now explain how to answer $\eps$-ANN queries for the inner cluster. Consider the sites of the inner cluster, which all lie within $B_w$ (see Figure~\ref{f:inner}(a)). We apply Lemma~\ref{l:perturbation} to produce the perturbed family $\mathcal{F'}$ of $\tau$-admissible functions (see Figure~\ref{f:inner}(b)).

\begin{figure*}[htbp]
  \centerline{\includegraphics[scale=0.40]{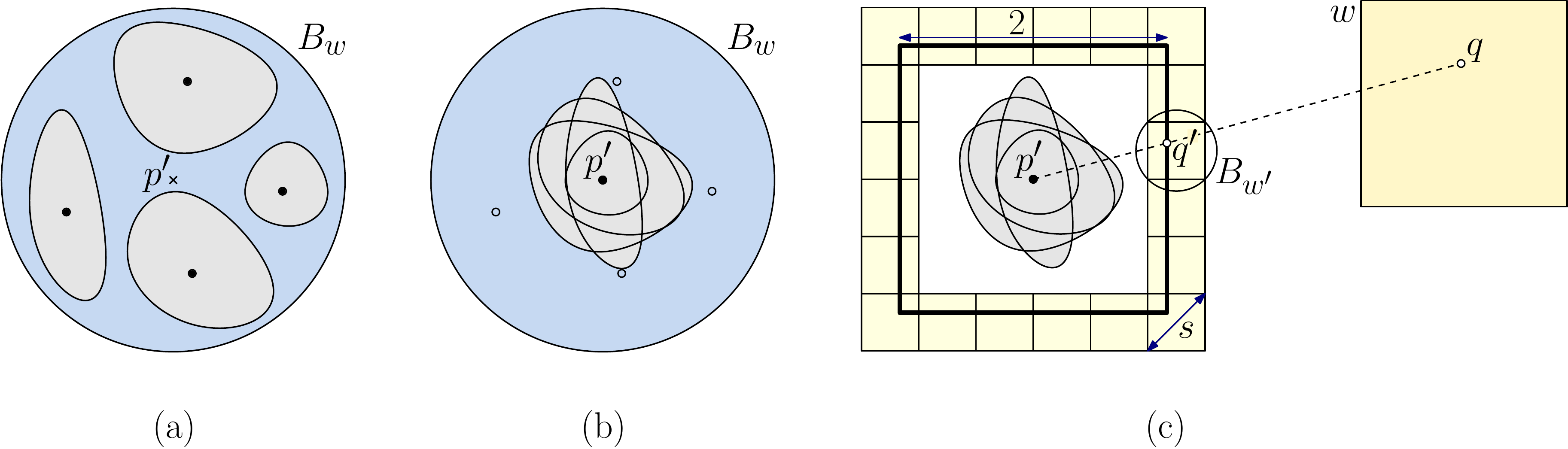}}
  \caption{(a) Inner-cluster sites with their respective distance functions, (b) their perturbation to a common site $p'$, and (c) the reduction to Lemma~\ref{l:ray-shoot-general}.}
  \label{f:inner}
\end{figure*}

Since these are all scaling distance functions, the nearest neighbor of any query point $q \in \RE^d$ (irrespective of whether it lies within $w$) is the same for every point on the ray from $p'$ through $q$. Therefore, it suffices to evaluate the answer to the query for any single query point $q'$ on this ray. In particular, let us fix a hypercube of side length $2$ centered at $p'$ (see Figure~\ref{f:inner}(c)). We will show how to answer $(\eps/3)$-AVR queries for points on the boundary of this hypercube with respect to $\mathcal{F'}$. A general query will then be answered by computing the point where the ray from $p'$ to the query point intersects the hypercube's boundary and returning the result of this query. The total error with respect to the original functions will be at most $(1 + \eps/2)(1 + \eps/3)$, and for all sufficiently small $\eps$, this is at most $1 + \eps$, as desired.

All that remains is to show how to answer $(\eps/3)$-AVR queries for points on the boundary of the hypercube. Let $s = 1/(2\tau + 1)$, and let $W$ be a set of hypercubes of diameter $s$ that cover the boundary of the hypercube of side length $2$ centered at $p'$ (see Figure~\ref{f:inner}(c)). The number of such boxes is $O(\tau^{d-1})$. For each $w' \in W$, let $B_{w'}$ be the smallest ball enclosing $w'$. Each point on the hypercube is at distance at least $1$ from $p'$. For each $w' \in W$, we have $\dist(p', B_{w'}) \geq 1 - s = 2 \tau \cdot \diam(B_{w'})$, implying that $p'$ and $B_{w'}$ are $(2 \tau)$-separated. Therefore, by Lemma~\ref{l:ray-shoot-general} there is a data structure that can answer $(\eps/3)$-AVR queries with respect to the perturbed distance functions $\mathcal{F'}_{\min}$ in time $O(\log \inv{\eps})$ with storage $O((\inv{\eps})^{d/2})$.

In summary, a query is answered by computing the ray from $p'$ through $q$, and determining the unique point $q'$ on the boundary of the hypercube that is hit by this ray. We then determine the hypercube $w'$ containing $q'$ in constant time and invoke the associated data structure for answering $(\eps/3)$-AVR queries with respect to $\mathcal{F'}$. The total storage needed for all these structures is $O(\tau^{d-1}/\eps^{d/2})$. For any query point, we can determine which of these data structures to access in $O(1)$ time. Relative to the case of the outer cluster, we suffer only an additional factor of $O(\tau^{d-1})$ to store these data structures. 

Under our assumption that $\gamma$ and $\sigma$ are constants, it follows that both $\tau$ and $\alpha$ are constants and $\beta$ is $O(1/\eps)$. By Lemma~\ref{l:separation}, the total number of leaf nodes in the $(\alpha,\beta)$-AVD is $O(n \log \inv{\eps})$. Combining this with the $O(1/\eps^{d/2})$ space for the data structure to answer queries with respect to the outer cluster and $O(\tau^{d-1}/\eps^{d/2})$ overall space for the inner cluster, we obtain a total space of $O((n \log\inv\eps)/\eps^{d/2})$. The query time is simply the combination of the $O(\log (\alpha n) + \log\log \beta) = O(\log n + \log\log \inv{\eps})$ time to locate the leaf cell (by Lemma~\ref{l:separation}), and the $O(\log \inv{\eps})$ time to answer $O(\eps)$-AVR queries. The total query time is therefore $O(\log \frac{n}{\eps})$, as desired. This establishes Theorem~\ref{thm:convexdf}.

\section{Answering ANN Queries for Bregman Divergences} \label{s:bregman-ann}

In this section we demonstrate how to answer $\eps$-ANN queries for a set of $n$ sites over a Bregman divergence. We assume that the Bregman divergence is defined by a strictly convex, twice-differentiable function $F$ over an open convex domain $\XX \subseteq \RE^d$. As mentioned in the introduction, given a site $p$, we interpret the divergence $D_F(x,p)$ as a distance function of $x$ about $p$, that is, analogous to $f_p(x)$ for scaling distances. Thus, gradients and Hessians are defined with respect to the variable $x$. Our results will be based on the assumption that the divergence is $\tau$-admissible for a constant $\tau$. This will be defined formally in the following section.

\subsection{Measures of Bregman Complexity} \label{s:bregman-complexity}

In Section~\ref{s:intro} we introduced the concepts of similarity and asymmetry for Bregman divergences. We can extend the notion of admissibility to Bregman divergences by defining a Bregman divergence $D_F$ to be \emph{$\tau$-admissible} if the associated distance function $f_p(\cdot) = D_F(\cdot,p)$ is $\tau$-admissible.

It is natural to ask how the various criteria of Bregman complexity (asymmetry, similarity, and admissibility) relate to each other. For the sake of relating admissibility with asymmetry, it will be helpful to introduce a directionally-sensitive variant of admissibility. Given $f_p$ and $\tau$ as above, we say that $f_p$ is \emph{directionally $\tau$-admissible} if for all $x \in \XX$, $\ang{\Gradient f_p(x), x - p} \leq \tau f_p(x)$. (Note that only the gradient condition is used in this definition.) The following lemma provides some comparisons. The proof is rather technical and has been deferred to Section~\ref{s:deferred}.

\begin{restatable}{lemma}{bregmancomparisons}
\label{l:bregman-comparisons}
Given an open convex domain $\XX \subseteq \RE^d$:
\begin{enumerate}
\item[$(i)$] Any $\mu$-similar Bregman divergence over $\XX$ is $2 \mu$-admissible.

\item[$(ii)$] Any $\mu$-admissible Bregman divergence over $\XX$ is directionally $\mu$-admissible.

\item[$(iii)$] A Bregman divergence over $\XX$ is $\mu$-asymmetric if and only if it is directionally $(1+\mu)$-admissible.
\end{enumerate}
\end{restatable}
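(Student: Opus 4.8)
The plan is to unwind the definitions in each part, reducing everything to the defining inequality of the Bregman divergence, namely that for the associated distance function $f_p(x) = D_F(x,p) = F(x) - F(p) - \ang{\Gradient F(p), x - p}$, we have $\Gradient f_p(x) = \Gradient F(x) - \Gradient F(p)$ and $\Hess f_p(x) = \Hess F(x)$. Part $(iii)$ is the natural place to start, since it is essentially a restatement: by the integral form of Taylor's theorem (or the mean value theorem applied to $F$), one can write $D_F(x,p)$ and $D_F(p,x)$ as integrals of $\Hess F$ along the segment $\overline{xp}$, and also express $\ang{\Gradient f_p(x), x - p} = \ang{\Gradient F(x) - \Gradient F(p), x - p}$. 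A short computation shows that $\ang{\Gradient F(x) - \Gradient F(p), x - p} = D_F(x,p) + D_F(p,x)$, which is the symmetrized Bregman divergence. Given this identity, directional $(1+\mu)$-admissibility says exactly that $D_F(x,p) + D_F(p,x) \le (1+\mu) D_F(x,p)$, i.e.\ $D_F(p,x) \le \mu D_F(x,p)$, which (after the appropriate relabeling of $p$ and $x$ in the definition of $\mu$-asymmetry, keeping track of which argument is the ``site'') is precisely $\mu$-asymmetry. Both directions fall out of the same identity, so $(iii)$ should be quick.

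For part $(ii)$, I would simply observe that directional admissibility uses only condition $(i)$ of admissibility with the dot product $\ang{\Gradient f_p(x), x-p}$ in place of the product of norms $\|\Gradient f_p(x)\|\,\|x-p\|$. Since $\ang{\Gradient f_p(x), x - p} \le \|\Gradient f_p(x)\|\,\|x - p\|$ by Cauchy–Schwarz, condition $(i)$ of $\mu$-admissibility immediately implies the directional bound with the same constant $\mu$. This is a one-line argument.

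Part $(i)$ is where the real work lies, and I expect it to be the main obstacle. We must show that $\mu$-similarity — i.e.\ $\|x-p\|^2 \le D_F(x,p) \le \mu\|x-p\|^2$ — implies both admissibility conditions with constant $2\mu$. The Hessian condition is the easier of the two: $\mu$-similarity constrains $\Hess F$ fairly directly, since the lower bound $D_F(x,p) \ge \|x-p\|^2$ forces $\Hess F \succeq 2I$ (taking $p \to x$), and the upper bound $D_F(x,p) \le \mu\|x-p\|^2$ forces $\Hess F \preceq 2\mu I$; hence $\|\Hess f_p(x)\| = \|\Hess F(x)\| \le 2\mu$, and then condition $(ii)$ follows because $\|x-p\|^2 \le D_F(x,p) = f_p(x)$, giving $\|\Hess f_p(x)\|\,\|x-p\|^2 \le 2\mu f_p(x) \le (2\mu)^2 f_p(x)$. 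The gradient condition is trickier: we need to bound $\|\Gradient f_p(x)\| = \|\Gradient F(x) - \Gradient F(p)\|$. Writing this difference as $\int_0^1 \Hess F(p + t(x-p))(x-p)\,dt$ and using $\|\Hess F\| \le 2\mu$ gives $\|\Gradient f_p(x)\| \le 2\mu\|x-p\|$, so $\|\Gradient f_p(x)\|\,\|x-p\| \le 2\mu\|x-p\|^2 \le 2\mu D_F(x,p) = 2\mu f_p(x)$, which is condition $(i)$ with constant $2\mu$ (in fact the bound is even a bit better than needed). The delicate points are (a) making sure the eigenvalue bounds on $\Hess F$ genuinely follow from similarity — this requires a limiting argument as the two arguments coalesce, using strict convexity and twice-differentiability so that $\Hess F$ is well-defined and continuous — and (b) confirming that the constant $2\mu$ (rather than $\mu$ or $\mu^2$) is what the statement actually claims and is consistent across both conditions. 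I would present the Hessian eigenvalue bounds as the first step, then derive the gradient bound from them, and finally assemble conditions $(i)$ and $(ii)$.
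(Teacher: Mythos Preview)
Your proposal is correct and follows essentially the same approach as the paper: part~(iii) via the symmetrized-divergence identity $D_F(x,p)+D_F(p,x)=\ang{\Gradient F(x)-\Gradient F(p),x-p}$, part~(ii) via Cauchy--Schwarz, and part~(i) by first extracting the eigenvalue bounds $2I\preceq\Hess F\preceq 2\mu I$ from similarity and then deriving the gradient and Hessian admissibility conditions. The only cosmetic differences are that the paper obtains the eigenvalue bounds by a contradiction/continuity argument (rather than your limit $p\to x$) and packages the gradient bound via a mean-value step (your integral representation is equivalent).
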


Note that claim~(i) is strict since the Bregman divergence $D_F$ defined by $F(x) = x^4$ over $\XX = \RE$ is not $\mu$-similar for any $\mu$, but it is 4-admissible. We do not know whether claim~(ii) is strict, but we conjecture that it is.

\subsection{ANN Data Structure for Bregman Divergences} \label{s:bregman-data-struct}

Let us return to the discussion of how to answer $\eps$-ANN queries for a $\tau$-admissible Bregman divergence over a domain $\XX$. Because any distance function that is $\tau$-admissible is $\tau'$-admissible for any $\tau' \geq \tau$, we may assume that $\tau \geq 1$.%
\footnote{Indeed, it can be shown that any distance function that is convex, as Bregman divergences are, cannot be $\tau$-admissible for $\tau < 1$.}
We begin by building an $(\alpha,\beta)$-AVD over $\RE^d$ by invoking Lemma~\ref{l:separation} for $\alpha = 2 \tau$ and $\beta = 4 \tau^2/\eps$. (These choices will be justified below.) For each leaf cell $w$, the nearest neighbor of any query point $q \in w$ can arise from one of the three cases in the lemma. Cases~(i) and~(ii) are handled in exactly the same manner as in Section~\ref{s:scaling-data-struct}. (Case~(i) is trivial, and case~(ii) applies for any $\tau$-admissible family of functions.)

It remains to handle case~(iii), the \emph{inner cluster}. Recall that these sites lie within a ball $B_w$ such that $\dist(B_w, w) \ge \beta \cdot \diam(B_w)$. We show that as a result of choosing $\beta$ sufficiently large, for any query point in $w$ the distance from all the sites within $B_w$ are sufficiently close that we may select any of these sites as the approximate nearest neighbor. This is a direct consequence of the following lemma. The proof has been deferred to Section~\ref{s:deferred}.

\begin{restatable}{lemma}{bregmaninner}
\label{l:bregman-inner}
Let $D$ be a $\tau$-admissible Bregman divergence and let $0 < \eps \leq 1$. Consider any leaf cell $w$ of the $(\alpha,\beta)$-AVD, where $\beta \geq 4 \tau^2/\eps$. Then, for any $q \in w$ and points $p, p' \in B_w$
\[
    \frac{|D(q, p) - D(q, p')|}{D(q,p)}
        ~ \leq ~\eps.
\]
\end{restatable}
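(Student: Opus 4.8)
The plan is to reduce the statement to Lemma~\ref{l:admiss-util} applied to a ball centered at the query point $q$, using the fact that $q$ is far from $B_w$ relative to $\diam(B_w)$. The key geometric observation is the following: since $w$ is a leaf cell of the $(\alpha,\beta)$-AVD and $B_w$ is its associated inner-cluster ball with $\dist(B_w,w) \ge \beta\cdot\diam(B_w)$, any query point $q\in w$ satisfies $\dist(q,B_w)\ge\beta\cdot\diam(B_w)$. Now fix $q$ and form the translate $B_q$ of $B_w$ centered at $q$; equivalently, one can work directly with a Euclidean ball $B^\ast$ of radius $\diam(B_w)$ (say) centered at $q$ and note that every $p,p'\in B_w$ lies within Euclidean distance $\dist(q,B_w)+\diam(B_w)$ of $q$ and at Euclidean distance at least $\dist(q,B_w)$ from $q$. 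The essential point is that the site varies over $B_w$ while the ``query'' is the fixed point $q$, and the distance from $q$ to the varying site changes only slightly in relative terms. Since $D$ is $\tau$-admissible, the distance function $f_p(\cdot)=D(\cdot,p)$ is $\tau$-admissible; but here we are varying the \emph{second} argument, so the clean way is to invoke the perturbation idea already used in Lemma~\ref{l:perturbation}, or to re-derive the analogous bound directly from the gradient bound in the definition of $\tau$-admissibility applied with the roles set up correctly.

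Concretely, I would proceed as follows. First, let $p'$ be the center of $B_w$ and write $D(q,p)$ versus $D(q,p')$: by convexity and differentiability of $F$, for fixed $q$ the map $y\mapsto D(q,y)$ is differentiable, and I would bound $|D(q,p)-D(q,p')|$ by the mean value theorem along the segment $\overline{pp'}\subseteq B_w$, getting a point $s\in B_w$ with $|D(q,p)-D(q,p')| = |\langle \nabla_y D(q,s), p-p'\rangle| \le \|\nabla_y D(q,s)\|\cdot\|p-p'\| \le \|\nabla_y D(q,s)\|\cdot\diam(B_w)$. Next I need to relate $\|\nabla_y D(q,s)\|$ (the gradient in the site argument) back to something controlled by $\tau$-admissibility, which is stated for the gradient in the \emph{first} argument. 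Here I would use the identity $\nabla_y D(q,y) = \Hess F(y)(y-q)$, obtained by differentiating $D(q,y)=F(q)-F(y)-\langle\nabla F(y),q-y\rangle$ with respect to $y$; this ties the site-gradient to the Hessian of $F$, and the Hessian condition in $\tau$-admissibility (condition $(ii)$, applied as $\|\Hess f_q(s)\|\,\|s-q\|^2\le\tau^2 f_q(s)$, noting $\Hess f_q = \Hess F$) gives $\|\Hess F(s)\| \le \tau^2 D(q,s)/\|s-q\|^2$. Combining, $\|\nabla_y D(q,s)\| \le \tau^2 D(q,s)/\|s-q\|$. Then $\|s-q\|\ge\dist(q,B_w)\ge\beta\cdot\diam(B_w)$ yields $|D(q,p)-D(q,p')| \le \tau^2 D(q,s)/\beta$.

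Finally I would bootstrap from this difference bound on the center $p'$ to a relative bound comparing $D(q,p)$ and $D(q,p')$, and then to arbitrary $p,p'\in B_w$ via the triangle inequality on differences. The bound $|D(q,p)-D(q,p')|\le(\tau^2/\beta)D(q,s)$ for some intermediate $s$ needs to be converted to $\le(\text{const}\cdot\tau^2/\beta)D(q,p)$; this is exactly the kind of ``$f^+(B)\le\frac{\kappa}{\kappa-1}f^-(B)$'' argument from Lemma~\ref{l:admiss-util}(i), applied to the function $y\mapsto D(q,y)$ restricted to $B_w$ — which is $\tau$-admissible about $q$ in the relevant sense because $\|\Hess_y D(q,y)\|$ is controlled as above — giving $D(q,s)\le 2D(q,p)$ when $\kappa=\beta/\tau\ge 2$. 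Plugging $\beta\ge 4\tau^2/\eps$ then gives the claimed bound $\le\eps$, possibly after absorbing small constant factors into the choice of $\beta$ (the paper's $\beta=4\tau^2/\eps$ should be tuned so the final constant comes out to exactly $1$). I expect the main obstacle to be the bookkeeping with the two argument-slots of $D$: making rigorous the claim that variation in the site argument is controlled by the Hessian of $F$ and hence by $\tau$-admissibility, and checking that the constants in the chain of inequalities close up to give precisely $\eps$ rather than $O(\eps)$. Everything else is a routine application of the mean value theorem, Cauchy–Schwarz, and Lemma~\ref{l:admiss-util}.
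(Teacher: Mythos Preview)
Your overall strategy---mean value theorem in the site argument, the identity $\nabla_y D(q,y)=\Hess F(y)(y-q)$, and a Hessian bound from $\tau$-admissibility---matches the paper's route closely. But there is a slip in the argument order that propagates into a real gap.

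When you invoke the Hessian condition of $\tau$-admissibility at the intermediate point $s$, you are applying it to $f_q(\cdot)=D(\cdot,q)$ (this is the function whose Hessian equals $\Hess F$, by Lemma~\ref{l:bregman-util}(iv)). That yields $\|\Hess F(s)\|\le \tau^2 D(s,q)/\|s-q\|^2$, with $D(s,q)$, \emph{not} $D(q,s)$. Consequently your intermediate bound should read $|D(q,p)-D(q,p')|\le (\tau^2/\beta)\,D(s,q)$, and the closing step must convert $D(s,q)$ into a multiple of $D(q,p)$. Your plan to do this by applying Lemma~\ref{l:admiss-util}(i) to the function $y\mapsto D(q,y)$ does not work as stated: part~(i) of that lemma rests on the \emph{gradient} condition of admissibility, and you have not shown (and cannot show from the Hessian bound alone) that $\|\nabla_y D(q,y)\|\,\|y-q\|\le \tau\, D(q,y)$. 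What does work---and is what the paper does---is to apply Lemma~\ref{l:admiss-util}(i) to $f_q(\cdot)=D(\cdot,q)$ on $B_w$ (this function \emph{is} $\tau$-admissible by hypothesis), obtaining $D(s,q)\le \frac{\kappa}{\kappa-1}D(p,q)$ with $\kappa=\beta/\tau$, and then invoke the asymmetry bound $D(p,q)\le(\tau-1)D(q,p)$ from Lemma~\ref{l:bregman-comparisons}(iii). This asymmetry step is the missing ingredient in your outline; without it the two slots of $D$ cannot be exchanged, and the chain of constants will not close to $\eps$ under $\beta=4\tau^2/\eps$.
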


Under our assumption that $\tau$ is a constant, $\alpha$ is a constant and $\beta$ is $O(1/\eps)$. The analysis is similar to the case for scaling distances. By Lemma~\ref{l:separation}, the total number of leaf nodes in the $(\alpha,\beta)$-AVD is $O(n \log \inv{\eps})$. We require only one representative for cases~(i) and~(iii), and as in Section~\ref{s:scaling-data-struct}, we need space $O(1/\eps^{d/2})$ to handle case~(ii). The query time is simply the combination of the $O(\log (\alpha n) + \log\log \beta) = O(\log n + \log\log \inv{\eps})$ time to locate the leaf cell (by Lemma~\ref{l:separation}), and the $O(\log \inv{\eps})$ time to answer $O(\eps)$-AVR queries for case~(ii). The total query time is therefore $O(\log \frac{n}{\eps})$, as desired. This establishes Theorem~\ref{thm:bregman}.

\section{Deferred Technical Details} \label{s:deferred}

In this section we present a number of technical results and proofs, which have been deferred from the main presentation.

\subsection{On Vertical Ray Shooting}

In this section we present a proof of Lemma~\ref{l:ray-shoot-vertical} from Section~\ref{s:ray-shoot}, which shows how to answer approximate vertical ray-shooting queries for the lower envelope of concave functions in a very restricted context.

{\rayshootvertical*}

We will follow the strategy presented in \cite{AFM18a} for answering $\eps$-ANN queries. It combines (1) a data structure for answering approximate central ray-shooting queries, in which the rays originate from a common point and (2) an approximation-preserving reduction from vertical to central ray-shooting queries~\cite{AFM17a}. 

Let $K$ denote a closed convex body that is represented as the intersection of a finite set of halfspaces. We assume that $K$ is centrally $\gamma$-fat for some constant $\gamma$ (recall the definition from Section~\ref{s:scaling-admiss}). An \emph{$\eps$-approximate central ray-shooting query} ($\eps$-ACR query) is given a query ray that emanates from the origin and returns the index of one of $K$'s bounding hyperplanes $h$ whose intersection with the ray is within distance $\eps \cdot \diam(K)$ of the true contact point with $K$'s boundary. We will make use of the following result, which is paraphrased from \cite{AFM17a}.

\begin{description}
\item[Approximate Central Ray-Shooting:] Given a convex polytope $K$ in $\RE^d$ that is centrally $\gamma$-fat for some constant $\gamma$ and an approximation parameter $0 < \eps \le 1$, there is a data structure that can answer $\eps$-ACR queries in time $O(\log \inv{\eps})$ and storage $O(1/\eps^{(d-1)/2})$.
\end{description}

As in Section~{4} of \cite{AFM17a}, we can employ a projective transformation that converts vertical ray shooting into central ray shooting. While the specific transformation presented there was tailored to work for a set of hyperplanes that are tangent to a paraboloid, a closer inspection reveals that the reduction can be generalized (with a change in the constant factors) provided that the following quantities are all bounded above by a constant: (1) the diameter of the domain of interest, (2) the difference between the maximum and minimum function values throughout this domain, and (3) the absolute values of the slopes of the hyperplanes (or equivalently, the norms of the gradients of the functions defined by these hyperplanes). This projective transformation produces a convex body in $\RE^{d+1}$ that is centrally $\gamma$-fat for some constant $\gamma$, and it preserves relative errors up to a constant factor. 

Therefore, by applying this projective transformation, we can reduce the problem of answering $\eps$-AVR queries in dimension $d$ for the lower envelope of a set of linear functions to the aforementioned ACR data structure in dimension $d+1$. The only remaining issue is that the functions of $\mathcal{F}$ are concave, not necessarily linear. Thus, the output of the reduction is a convex body bounded by curved patches, not a polytope. We address this by applying Dudley's Theorem~\cite{Dud74} to produce a polytope that approximates this convex body to an absolute Hausdorff error of $\eps/2$. (In particular, Dudley's construction samples $O(1/\eps^{d/2})$ points on the boundary of the convex body, and forms the approximation by intersecting the supporting hyperplanes at each of these points.) We then apply the ACR data structure to this approximating polytope, but with the allowed error parameter set to $\eps/2$. The combination of the two errors, results in a total allowed error of $\eps$. 

In order to obtain a witness, each sample point from Dudley's construction is associated with the function(s) that are incident to that point. We make the general position assumption that no more than $d+1$ functions can coincide at any point on the lower envelope of $\mathcal{F}$, and hence each sample point is associated with a constant number of witnesses. The witness produced by the ACR data structure will be one of the bounding hyperplanes. We check each of the functions associated with the sample point that generated this hyperplane, and return the index of the function having the smallest function value.

\subsection{On Admissibility and Scaling Functions}

Next, we present a proof of Lemma~\ref{l:scaling-admiss} from Section~\ref{s:scaling-admiss}, which relates the admissibility of a scaling distance function to the fatness and smoothness of the associated metric ball.

{\scalingadmiss*}

\begin{proof}
For any point $x \in \RE^d$, we will show that (i) $\|\Gradient f_p(x)\| \cdot \|x-p\| \le f_p(x) /\gamma$ and (ii) $\|\Hess f_p(x)\| \cdot \|x-p\|^2 \le 2 f_p(x) / (\sigma \gamma^3)$ . It will follow that $f_p$ is $\tau$-admissible for $\tau = \sqrt{2 / (\sigma \gamma^3)}$.

Let $K$ denote the unit metric ball associated with $f_p$ and let $K'$ denote the scaled copy of $K$ that just touches the point $x$. Let $r$ be the unit vector in the direction $p x$ (we refer to this as the \emph{radial direction}), and let $n$ be the outward unit normal vector to the boundary of $K'$ at $x$. (Throughout the proof, unit length vectors are defined in the Euclidean sense.) As $K'$ is centrally $\gamma$-fat, it is easy to see that the cosine of the angle between $r$ and $n$, that is, $\ang{r, n}$, is at least $\gamma$. As the boundary of $K'$ is the level surface of $f_p$, it follows that $\Gradient f_p(x)$ is directed along $n$. To compute the norm of the gradient, note that
\[
    \ang{\Gradient f_p(x), r} 
        ~ = ~ \lim_{\delta \rightarrow 0} \frac{f_p(x + \delta r) - f_p(x)}{\delta}.
\]
As $f_p$ is a scaling distance function, it follows that 
\[
    f_p(x + \delta r) - f_p(x) 
        ~ = ~ \frac{\delta}{\|x - p\|} f_p(x).
\]
Thus,
\[
    \ang{\Gradient f_p(x), r} 
        ~ = ~ \frac{f_p(x)}{\|x - p\|}.
\]
Recalling that $\ang{r, n} \geq \gamma$, we obtain
\[
    \|\Gradient f_p(x)\| 
        ~ \leq ~ \frac{f_p(x)}{\gamma \|x - p\|}.
\]
Thus $\|\Gradient f_p(x)\| \cdot \|x-p\| \le f_p(x) /\gamma$, as desired.

We next bound the norm of the Hessian $\Hess f_p(x)$. As the Hessian matrix is positive semidefinite, recall that it has a full set of independent eigenvectors that are mutually orthogonal, and its norm equals its largest eigenvalue. Because $f_p$ is a scaling distance function, it changes linearly along the radial direction. Therefore, one of the eigenvectors of $\Hess f_p(x)$ is in direction $r$, and the associated eigenvalue is 0 (see Figure~\ref{f:hessian}). It follows that the remaining eigenvectors all lie in a subspace that is orthogonal to $r$. In particular, the eigenvector associated with its largest eigenvalue must lie in this subspace. Let $u$ denote such an eigenvector of unit length, and let $\lambda$ denote the associated eigenvalue. 

\begin{figure}[htbp]
  \centerline{\includegraphics[scale=0.60]{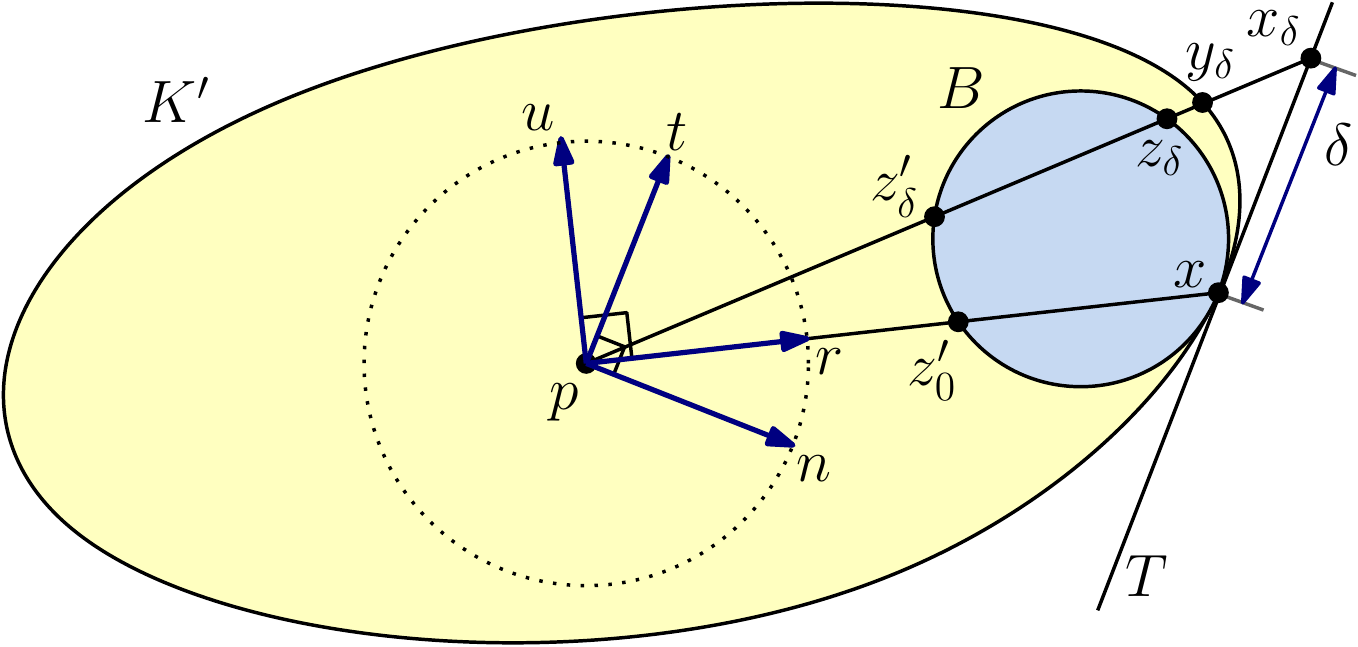}}
  \caption{Proof of Lemma~\ref{l:scaling-admiss}.}
  \label{f:hessian}
\end{figure}

Note that $\lambda$ is the second directional derivative of $f_p$ in the direction $u$. In order to bound $\lambda$, we find it convenient to first bound the second directional derivative of $f_p$ in a slightly different direction. Let $T$ denote the hyperplane tangent to $K'$ at point $x$. We project $u$ onto $T$ and let $t$ denote the resulting vector scaled to have unit length. We will compute the second directional derivative of $f_p$ in the direction $t$. Let $\lambda_t$ denote this quantity. In order to relate $\lambda_t$ with $\lambda$, we write $t$ as $\ang{t, r} r + \ang{t, u} u$. Since $r$ and $u$ are mutually orthogonal eigenvectors of $\Hess f_p(x)$, by elementary linear algebra, it follows that $\lambda_t = \ang{t, r}^2 \lambda_r + \ang{t, u}^2 \lambda_u$, where $\lambda_r$ and $\lambda_u$ are the eigenvalues associated with $r$ and $u$, respectively. Since $\lambda_r = 0$, $\lambda_u = \lambda$, and $\ang{t, u} = \ang{r, n} \geq \gamma$, we have $\lambda_t \geq \gamma^2 \lambda$, or equivalently, $\lambda \le \lambda_t / \gamma^2$. In the remainder of the proof, we will bound $\lambda_t$, which will yield the desired bound on $\lambda$.

Let $x_{\delta} = x + \delta t$ and $\psi(\delta) = f_p(x_{\delta})$. Clearly $\lambda_t = \psi''(0)$. Using the Taylor series and the fact that $\psi'(0) = \ang{\Gradient f_p(x), t} = 0$, it is easy to see that
\[
    \psi''(0) 
        ~ = ~ 2 \cdot \lim_{\delta \rightarrow 0} \frac{\psi(\delta) - \psi(0)}{\delta^2}.
\]
Letting $y_{\delta}$ denote the intersection point of the segment $\overline{px_{\delta}}$ with the boundary of $K'$, and observing that both $x$ and $y_{\delta}$ lie on $\partial K'$ (implying that $f_p(x) = f_p(y_{\delta})$), we have 
\[
    \psi(\delta) 
        ~ = ~ f_p(x_{\delta}) 
        ~ = ~ \frac{\|x_{\delta}- p\|}{\|y_{\delta} - p\|} f_p(x),
\]
and thus
\[
    \psi(\delta) - \psi(0)
        ~ = ~ \frac{\|x_{\delta} - p\| - \|y_{\delta} - p\|}{\|y_{\delta} - p\|} f_p(x) 
      ~ = ~ \frac{\|x_{\delta} - y_{\delta}\|}{\|y_{\delta} - p\|} f_p(x).
\]
It follows that
\[
    \psi''(0) 
        ~ = ~ 2 \cdot \lim_{\delta \rightarrow 0} \frac{1}{\delta^2} 
                \frac{\|x_{\delta} - y_{\delta}\|}{\|y_{\delta} - p\|} f_p(x)
        ~ = ~ \frac{2 f_p(x)}{\|x - p\|} \cdot \lim_{\delta \rightarrow 0} 
        \frac{\|x_{\delta} - y_{\delta}\|}{\delta^2}.
\]

We next compute this limit. Let $B \subset K'$ denote the maximal ball tangent to $K'$ at $x$ and let $R$ denote its radius. As $K'$ is $\sigma$-smooth, we have that
\[
    R
        ~ \geq ~ \frac{\sigma}{2} \cdot \diam(K') 
        ~ \geq ~ \frac{\sigma}{2} \cdot \|x - p\|.
\]
Consider the line passing through $p$ and $x_{\delta}$. For sufficiently small $\delta$, it is clear that this line must intersect the boundary of the ball $B$ at two points. Let $z_{\delta}$ denote the intersection point closer to $x_{\delta}$ and $z'_{\delta}$ denote the other intersection point. Clearly, $\|x_{\delta} - y_{\delta}\| \le \|x_{\delta} - z_{\delta}\|$ and, by the power of the point theorem, we have 
\[
    \delta^2 
        ~ = ~ \|x_{\delta} - x\|^2 
        ~ = ~ \|x_{\delta} - z_{\delta}\| \cdot \|x_{\delta} - z'_{\delta}\|.
\]
It follows that
\[
    \frac{\|x_{\delta} - y_{\delta}\|}{\delta^2}
        ~ \le ~ \frac{\|x_{\delta} - z_{\delta}\|}{\delta^2}
        ~ = ~ \frac{1}{\|x_{\delta} - z'_{\delta}\|}.
\]
Thus
\[
    \lim_{\delta \rightarrow 0} \frac{\|x_{\delta} - y_{\delta}\|}{\delta^2}
      ~ \leq ~ \lim_{\delta \rightarrow 0} \frac{1}{\|x_{\delta} - z'_{\delta}\|}
      ~ = ~ \frac{1}{\|x - z'_0\|},
\]
where $z'_0$ denotes the point of intersection of the line passing through $p$ and $x$ with the boundary of $B$. Since the cosine of the angle between this line and the diameter of ball $B$ at $x$ equals $\ang{r, n}$, which is at least $\gamma$, we have $\|x - z'_0\| \ge 2 \gamma R$. It follows that
\[
    \lim_{\delta \rightarrow 0} \frac{\|x_{\delta} - y_{\delta}\|}{\delta^2}
        ~ \leq ~ \frac{1}{2 \gamma R}
        ~ \leq ~ \frac{1}{\sigma \gamma \|x - p\|}.
\]

Substituting this bound into the expression found above for $\lambda_t$, we obtain
\[
    \lambda_t 
        ~ = ~ \psi''(0)
        ~ \le ~ \frac{2 f_p(x)}{\sigma \gamma \|x - p\|^2}.
\]
Recalling that $\lambda \le \lambda_t / \gamma^2$, we have
\[
    \lambda 
        ~ \leq ~ \frac{2 f_p(x)}{\sigma \gamma^3 \|x - p\|^2},
\]
which implies that $\|\Hess f_p(x)\| \cdot \|x-p\|^2 \le 2 f_p(x) / (\sigma \gamma^3)$. This completes the proof.
\end{proof}

\subsection{On Bregman Divergences}

The following lemma provides some properties of Bregman divergences, which will be used later. Throughout, we assume that a Bregman divergence is defined by a strictly convex, twice-differentiable function $F$ over an open convex domain $\XX \subseteq \RE^d$. Given a site $p$, we interpret the divergence $D_F(x,p)$ as a distance function of $x$ about $p$, and so gradients and Hessians are defined with respect to the variable $x$. The following lemma provides a few useful observations regarding the Bregman divergence. We omit the proof since these all follow directly from the definition of Bregman divergence. Observation~(i) is related to the \emph{symmetrized Bregman divergence} \cite{AMV13}. Observation~(ii), known as the \emph{three-point property} \cite{BNN10}, generalizes the law of cosines when the Bregman divergence is the Euclidean squared distance.

\begin{lemma}
\label{l:bregman-util}
Given any Bregman divergence $D_F$ defined over an open convex domain $\XX$, and points $q, p, p' \in \XX$:
\begin{enumerate}
\item[$(i)$] $D_F(q, p) + D_F(p, q) = \ang{\Gradient F(q) - \Gradient F(p), q - p}$

\item[$(ii)$] $D_F(q, p') + D_F(p', p) = D_F(q, p)  + \ang{q - p', \Gradient F(p) - \Gradient F(p')}$

\item[$(iii)$] $\Gradient D_F(q, p) = \Gradient F(q) - \Gradient F(p)$

\item[$(iv)$] $\Hess D_F(q, p) = \Hess F(q)$.
\end{enumerate}
In parts (iii) and (iv), derivatives involving $D_F(q, p)$ are taken with respect to $q$.
\end{lemma}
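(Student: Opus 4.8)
The plan is to verify all four identities by direct substitution of the definition
$D_F(q,p) = F(q) - F(p) - \ang{\Gradient F(p), q - p}$, using only the bilinearity of the dot product; no convexity, differentiability beyond what is assumed, or inequalities are needed.

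For part~(i), I would write out $D_F(q,p)$ and $D_F(p,q)$ from the definition and add them. The scalar terms $F(q)$ and $F(p)$ each occur once with a plus and once with a minus sign and cancel, leaving $-\ang{\Gradient F(p), q-p} - \ang{\Gradient F(q), p-q}$. Replacing $p-q$ by $-(q-p)$ in the second inner product and collecting terms gives $\ang{\Gradient F(q) - \Gradient F(p), q-p}$, which is the claimed symmetrized identity.

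For part~(ii), I would use the same strategy. Expanding $D_F(q,p')$ and $D_F(p',p)$ and adding, the term $F(p')$ cancels and one is left with $F(q) - F(p) - \ang{\Gradient F(p'), q-p'} - \ang{\Gradient F(p), p'-p}$. Separately expanding the right-hand side $D_F(q,p) + \ang{q-p', \Gradient F(p) - \Gradient F(p')}$ and distributing the last inner product yields $F(q) - F(p) - \ang{\Gradient F(p), q-p} + \ang{\Gradient F(p), q-p'} - \ang{\Gradient F(p'), q-p'}$. Matching the two expressions term by term, the $F(q)-F(p)$ and $-\ang{\Gradient F(p'), q-p'}$ pieces cancel on both sides, and what remains is the single scalar identity $\ang{\Gradient F(p),\, (q-p) - (q-p') - (p'-p)} = 0$, which holds because the vector in the second slot is identically $0$. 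This is the one place where a little care with signs and bookkeeping of the three inner products is required, and it is the only mild obstacle in the whole argument.

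For parts~(iii) and~(iv), I would regard $D_F(q,p)$ as a function of $q$ with $p$ held fixed, writing it as $F(q) - F(p) - \ang{\Gradient F(p), q-p}$. Here $F(p)$ is a constant and $\ang{\Gradient F(p), q-p}$ is affine in $q$ with gradient $\Gradient F(p)$; differentiating in $q$ therefore gives $\Gradient_q D_F(q,p) = \Gradient F(q) - \Gradient F(p)$, which is~(iii). Differentiating once more in $q$ annihilates the constant vector $\Gradient F(p)$ and leaves $\Hess_q D_F(q,p) = \Hess F(q)$, establishing~(iv). Altogether the proof is a short exercise in unwinding the definition, with no step presenting a genuine difficulty.
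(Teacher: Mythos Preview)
Your proposal is correct and matches the paper's approach exactly: the paper simply states that all four identities ``follow directly from the definition of Bregman divergence'' and omits the proof entirely, so your direct-substitution argument is precisely the routine verification the authors had in mind.
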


The above result allows us to establish the following upper and lower bounds on the value, gradient, and Hessian of a Bregman divergence based on the maximum and minimum eigenvalues of the function's Hessian.

\begin{lemma}
\label{l:bregman-bounds}
Let $F$ be a strictly convex function defined over some domain $\XX \subseteq \RE^d$, and let $D_F$ denote the associated Bregman divergence. For each $x \in \XX$, let $\lambda_{\min}(x)$ and $\lambda_{\max}(x)$ denote the minimum and maximum eigenvalues of $\Hess F(x)$, respectively. Then, for all $p, q \in \XX$, there exist points $r_1$, $r_2$, and $r_3$ on the open line segment $\overline{p q}$ such that
\[ 
    \begin{array}{rcccl}
    \frac{1}{2} \lambda_{\min}(r_1) \|q - p\|^2
        & \leq & D_F(q, p)
        & \leq & \frac{1}{2} \lambda_{\max}(r_1) \|q - p\|^2 \\[5pt]
    \lambda_{\min}(r_2) \|q - p\|
        & \leq & \| \Gradient D_F(q,p) \|
        & \leq & \lambda_{\max}(r_3) \|q - p\| \\[5pt]
    \lambda_{\min}(q)
        & \leq & \| \Hess D_F(q, p) \|
        & \leq & \lambda_{\max}(q).
      \end{array}
\]
\end{lemma}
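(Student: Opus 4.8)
The plan is to prove all three double inequalities of Lemma~\ref{l:bregman-bounds} by exploiting the identities in Lemma~\ref{l:bregman-util} together with elementary mean-value / Taylor arguments, treating $D_F(\cdot,p)$ as a function of its first argument. The key structural facts are that $\Gradient D_F(q,p) = \Gradient F(q) - \Gradient F(p)$ and $\Hess D_F(q,p) = \Hess F(q)$ (parts (iii) and (iv) of Lemma~\ref{l:bregman-util}), which reduce everything to statements about $F$ and its derivatives along the segment $\overline{pq}$.

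First I would handle the value bound. Fix $p$ and let $g(t) = F(p + t(q-p))$ for $t \in [0,1]$, so that $g$ is twice differentiable with $g''(t) = (q-p)^{\Transpose} \Hess F(p + t(q-p)) (q-p)$. By the definition of the Bregman divergence, $D_F(q,p) = F(q) - F(p) - \ang{\Gradient F(p), q-p} = g(1) - g(0) - g'(0)$, which is exactly the first-order Taylor remainder of $g$ at $0$. Applying the integral (or Lagrange) form of the remainder, there is a point $t_1 \in (0,1)$ with $D_F(q,p) = \frac{1}{2} g''(t_1) = \frac{1}{2} (q-p)^{\Transpose} \Hess F(r_1)(q-p)$ where $r_1 = p + t_1(q-p)$ lies on the open segment. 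Since the Rayleigh quotient of a symmetric positive-definite matrix is pinched between its extreme eigenvalues, $\lambda_{\min}(r_1)\|q-p\|^2 \le (q-p)^{\Transpose}\Hess F(r_1)(q-p) \le \lambda_{\max}(r_1)\|q-p\|^2$, giving the claimed bound.

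Next the gradient bound. Using Lemma~\ref{l:bregman-util}(iii), $\|\Gradient D_F(q,p)\| = \|\Gradient F(q) - \Gradient F(p)\|$. For the upper bound, write $\Gradient F(q) - \Gradient F(p) = \int_0^1 \Hess F(p + t(q-p))(q-p)\, dt$; bounding the integrand in norm by $\lambda_{\max}(p+t(q-p))\|q-p\|$ and picking $r_3$ on the segment attaining (or, via continuity and the mean value theorem applied to the scalar function $t \mapsto \|\Gradient D_F(p+t(q-p),p)\|$ or to its directional component, dominating) the maximum over $t$ yields $\|\Gradient D_F(q,p)\| \le \lambda_{\max}(r_3)\|q-p\|$. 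For the lower bound I would instead consider the scalar function $h(t) = \ang{\Gradient F(p + t(q-p)) - \Gradient F(p),\, q - p}$; then $h(0)=0$, $h(1) = D_F(q,p) + D_F(p,q) \ge D_F(q,p) \ge 0$ is nonnegative, $h'(t) = (q-p)^{\Transpose}\Hess F(p+t(q-p))(q-p) \ge \lambda_{\min}(\cdot)\|q-p\|^2$, and by Cauchy--Schwarz $h(1) \le \|\Gradient F(q)-\Gradient F(p)\|\cdot\|q-p\|$; combining with the mean value theorem $h(1) = h'(t_2)$ for some $t_2 \in (0,1)$, setting $r_2 = p + t_2(q-p)$, gives $\lambda_{\min}(r_2)\|q-p\| \le \|\Gradient D_F(q,p)\|$. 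Finally the Hessian bound is immediate: by Lemma~\ref{l:bregman-util}(iv) the Hessian of $D_F(\cdot,p)$ at $q$ equals $\Hess F(q)$, whose spectral norm is its largest eigenvalue $\lambda_{\max}(q)$ and is at least its smallest eigenvalue $\lambda_{\min}(q)$ (both nonnegative by convexity), so no auxiliary point is needed and $r$ is literally $q$.

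The main obstacle is the gradient bounds, specifically producing a \emph{single} segment point $r_3$ (and $r_2$) that simultaneously controls the \emph{norm} of a vector-valued integral rather than a scalar quantity; the clean fix is to avoid vectorial mean-value theorems (which fail in $\RE^d$) and instead reduce to the scalar function $h$ above for the lower bound and to a direct norm estimate of the integral representation plus a continuity/extreme-value selection of $r_3$ for the upper bound. Everything else is a routine application of Taylor's theorem with integral remainder and the Rayleigh-quotient characterization of eigenvalues, so I would keep those parts terse.
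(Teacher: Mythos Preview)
Your proposal is correct and follows essentially the same route as the paper: Taylor/Lagrange remainder for the value bound, Lemma~\ref{l:bregman-util}(iii) plus a scalar mean-value argument for the gradient bound, and Lemma~\ref{l:bregman-util}(iv) directly for the Hessian bound. The one place you diverge is the \emph{upper} gradient bound, where you go through the integral representation and then have to extract a single $r_3$ (and you rightly flag this as the delicate step); the paper avoids this by choosing the unit vector $v$ in the direction of $\Gradient F(q)-\Gradient F(p)$, so that $\|\Gradient D_F(q,p)\| = v^{\Transpose}(\Gradient F(q)-\Gradient F(p))$ exactly, and then applies the scalar mean-value theorem to $t \mapsto v^{\Transpose}\Gradient F(p+t(q-p))$ to get $v^{\Transpose}\Hess F(r_3)(q-p)$ for some $r_3$ on the open segment, bounded above by $\lambda_{\max}(r_3)\|q-p\|$. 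This sidesteps the vectorial-MVT issue you identified without any extremal selection or integral mean-value step; you may want to adopt it in place of your integral argument.
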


\begin{proof}
To establish the first inequality, we apply Taylor's theorem with the Lagrange form of the remainder to obtain
\[
    F(q) 
        ~ = ~ F(p) + \ang{\Gradient F(p), q - p} + \frac{1}{2} (q - p)^{\Transpose} \Hess F(r_1) (q - p),
\]
for some $r_1$ on the open line segment $\overline{p q}$. By substituting the above expression for $F(q)$ into the definition of $D_F(q, p)$ we obtain
\[
    D_F(q, p) 
           =   ~ F(q) - F(p) - \ang{\Gradient F(p), q - p}
           =   ~ \frac{1}{2} (q - p)^{\Transpose} \Hess F(r_1) (q - p).
\]

By basic linear algebra,
\[
    \lambda_{\min}(r_1) \|q - p\|^2
        ~ \leq ~ (q - p)^{\Transpose} \Hess F(r_1) (q - p)
        ~ \leq ~ \lambda_{\max}(r_1) \|q - p\|^2.
\]
Therefore,
\[
    \frac{\lambda_{\min}(r_1)}{2} \|q - p\|^2
        ~ \leq ~ D_F(q, p)
        ~ \leq ~ \frac{\lambda_{\max}(r_1)}{2} \|q - p\|^2,
\]
which establishes the first assertion.

For the second assertion, we recall from Lemma~\ref{l:bregman-util}(iii) that $\Gradient D_F(q, p) = \Gradient F(q) - \Gradient F(p)$. Let $v$ be any unit vector. By applying the mean value theorem to the function $\psi(t) = v^{\Transpose} \Gradient F(p + t (q - p))$ for $0 \leq t \leq 1$, there exists a point $r_2 \in \overline{p q}$ (which depends on $v$) such that $v^{\Transpose} (\Gradient F(q) - \Gradient F(p)) = v^{\Transpose} \Hess F(r_2) (q - p)$. Taking $v$ to be the unit vector in the direction of $q-p$, and applying the Cauchy-Schwarz inequality, we obtain
\begin{align*}
    \|\Gradient D_F(q,p)\| 
        & ~ = ~ \|\Gradient F(q) - \Gradient F(p)\| 
        ~ \geq ~ |v^{\Transpose} (\Gradient F(q) - \Gradient F(p))| \\
        & ~ = ~ |v^{\Transpose} \Hess F(r_2) (q - p)|
        ~ \geq ~ \lambda_{\min}(r_2) \|q - p\|.
\end{align*}
For the upper bound, we apply the same approach, but take $v$ to be the unit vector in the direction of $\Gradient F(q) - \Gradient F(p)$. There exists $r_3 \in \overline{p q}$ such that
\begin{align*}
    \|\Gradient D_F(q,p)\| 
        & ~ = ~ \|\Gradient F(q) - \Gradient F(p)\| 
        ~ = ~ |v^{\Transpose} (\Gradient F(q) - \Gradient F(p))|
        ~ = ~ |v^{\Transpose} \Hess F(r_3) (q - p)| \\
        & ~ \leq ~ \|\Hess F(r_3) (q - p)\|
        ~ \leq ~ \lambda_{\max}(r_3) \|q - p\|.
\end{align*}
This establishes the second assertion.

The final assertion follows from the fact that $\Hess D_F(q, p) = \Hess F(q)$ (Lemma~\ref{l:bregman-util}(iv)) and the definition of the spectral norm.
\end{proof}

With the help of this lemma, we can now present a proof of Lemma~\ref{l:bregman-comparisons} from Section~\ref{s:bregman-complexity}, which relates the various measures of complexity for Bregman divergences.

{\bregmancomparisons*}

\begin{proof} 
For each $x \in \XX$, let $\lambda_{\min}(x)$ and $\lambda_{\max}(x)$ denote the minimum and maximum eigenvalues of $\Hess F(x)$, respectively. We first show that for all $x \in \XX$, $2 \leq \lambda_{\min}(x)$ and $\lambda_{\max}(x) \leq 2 \mu$. We will prove only the second inequality, since the first follows by a symmetrical argument. Suppose to the contrary that there was a point $x \in \XX$ such that $\lambda_{\max}(x) > 2 \mu$. By continuity and the fact that $\XX$ is convex and open, there exists a point $q \in \XX$ distinct from $x$ such that for any $r$ on the open line segment $\overline{q x}$,
\begin{equation}
    (q - x)^{\Transpose} \Hess F(r) (q - x) 
        ~ > ~ 2 \mu \|q - x\|^2. \label{e:admissibility-1}
\end{equation}
Specifically, we may take $q$ to lie sufficiently close to $x$ along $x + v$, where $v$ is the eigenvector associated with $\lambda_{\max}(x)$. As in the proof of Lemma~\ref{l:bregman-bounds}, we apply  Taylor's theorem with the Lagrange form of the remainder to obtain
\[
    D_F(q, x) 
        ~ = ~ F(q) - F(x) - \ang{\Gradient F(x), q - x} 
        ~ = ~ \frac{1}{2} (q - x)^{\Transpose} \Hess F(r) (q - x).
\]
By Eq.~\eqref{e:admissibility-1}, we have $D_F(q, x) > \mu \|q - x\|^2$. Therefore, $D_F$ is not $\mu$-similar. This yields the desired contradiction.

Because $2 \leq \lambda_{\min}(x) \leq \lambda_{\max}(x) \leq 2 \mu$ for all $x \in \XX$, by Lemma~\ref{l:bregman-bounds}, we have
\[ 
    \|q - p\|^2
        ~ \leq ~ D_F(q, p), \qquad
    \| \Gradient D_F(q,p) \|
        ~ \leq ~ 2 \mu \|q - p\|, \quad\hbox{and}\quad
    \| \Hess D_F(q, p) \|
        ~ \leq ~  2 \mu,
\]
which imply
\[
    \| \Gradient D_F(q, p) \| \kern+2pt \|q - p\|
        ~ \leq ~ 2 \mu D_F(q, p) 
    \qquad\hbox{and}\qquad
    \| \Hess D_F(q, p) \| \kern+2pt \|q - p\|^2
        ~ \leq ~ 2 \mu D_F(q, p),
\]
which together imply that $D$ is $2 \mu$-admissible, as desired.

To prove~(ii), observe that by the Cauchy-Schwarz inequality $\ang{\Gradient D_F(q, p), q - p} \leq \| \Gradient D_F(q, p)\| \cdot \| q - p \|$, and therefore, any divergence that satisfies the condition for $\mu$-admissibility immediately satisfies the condition for directional $\mu$-admissibility.

Finally, to show~(iii), consider any points $p, q \in \XX$. Recall the facts regarding the Bregman divergence presented in Lemma~\ref{l:bregman-util}. By combining observations~(i) and~(iii) from that lemma, we have $D_F(q, p) + D_F(p, q) = \ang{\Gradient D_F(q, p), q - p}$. Observe that if $D$ is directionally $(1 + \mu)$-admissible, then
\[
    D_F(q, p) + D_F(p, q)
        ~  =   ~ \ang{\Gradient D_F(q, p), q - p} 
        ~ \leq ~ (1 + \mu) D_F(q, p),
\]
which implies that $D_F(p, q) \leq \mu (D_F(q, p))$, and hence $D$ is $\mu$-asymmetric. Conversely, if $D$ is $\mu$-asymmetric, then
\[
    \ang{\Gradient D_F(q, p), q - p}
        ~  =   ~ D_F(q, p) + D_F(p, q) \\
        ~ \leq ~ D_F(q, p) + \mu \kern+1pt D_F(q, p) 
        ~  =   ~ (1 + \mu) D_F(q, p),
\]
implying that $D_F$ is directionally $(1 + \mu)$-admissible. (Recall that directional admissibility requires only that the gradient condition be satisfied.)
\end{proof}

Next, we provide a proof of Lemma~\ref{l:bregman-inner} from Section~\ref{s:bregman-data-struct}.

{\bregmaninner*}

\begin{proof} 
Without loss of generality, we may assume that $D(q, p) \geq D(q, p')$. By adding $D(p, p')$ to the left side of Lemma~\ref{l:bregman-util}(ii) and rearranging terms, we have
\begin{align*}
    D(q, p) - D(q, p')
        & \leq  ~ (D(q, p) - D(q, p')) + D(p, p') \\
        &   =   ~ (D(p', p) + \ang{\Gradient F(p') - \Gradient F(p), q - p'}) +  D(p, p') \\
        &   =   ~ \ang{\Gradient F(p') - \Gradient F(p), q - p'} +  (D(p', p) + D(p, p')).
\end{align*}
By Lemma~\ref{l:bregman-util}(i) we have
\begin{align*}
    D(q, p) - D(q, p')
        & ~ \leq  ~ \ang{\Gradient F(p') - \Gradient F(p), q - p'} +  \ang{\Gradient F(p') - \Gradient F(p), p' - p} \\
        & ~  =    ~ \ang{\Gradient F(p') - \Gradient F(p), q - p}.
\end{align*}

Let $v$ be any vector. Applying the mean value theorem to the function $\psi(t) = v^{\Transpose} \Gradient F(p + t (p' - p))$ for $0 \leq t \leq 1$, implies that there exists a point $r \in \overline{p p'}$ (which depends on $v$) such that $v^{\Transpose} (\Gradient F(p') - \Gradient F(p)) = v^{\Transpose} \Hess F(r) (p' - p)$. Letting $v = q-p$, and applying the Cauchy-Schwarz inequality, we obtain
\[
    D(q, p) - D(q, p')
        ~ \leq  ~ \ang{\Hess F(r) (p' - p), q - p} 
        ~ \leq  ~ \| \Hess F(r) \| \kern+1pt \|p' - p\| \kern+1pt \|q - p\|.
\]
By Lemma~\ref{l:bregman-util}(iv) and $\tau$-admissibility, $\|\Hess F(r)\| = \|\Hess D(r, q)\| \leq \tau D(r, q) / \|r - q\|^2$, which implies
\begin{equation}
    D(q, p) - D(q, p')
        ~ \leq  ~ \frac{\tau D(r, q)}{\|r - q\|^2} \kern+1pt \|p' - p\| \kern+1pt \|q - p\|. \label{e:bregman-inner-1}
\end{equation}
Since $r$ lies on the segment between $p'$ and $p$, it follows that $r \in B_w$. Letting $\delta = \diam(B_w)$, we have $\max(\|p' - p\|, \|r - p\|) \leq \delta$ and $\|r - q\| \geq \beta \delta$. By the triangle inequality, $\|q - p\| \leq \|q - r\| + \|r - p\|$. Therefore,
\[
    \frac{\|q - p\|}{\|r - q\|} 
        ~ \leq  ~ \frac{\|q - r\| + \|r - p\|}{\|r - q\|}
        ~   =   ~ 1 + \frac{\|r - p\|}{\|r - q\|}
        ~ \leq  ~  1 + \frac{1}{\beta},
\]
and since clearly $\beta \geq 1$, 
\begin{equation}
    \frac{\|p' - p\| \kern+1pt \|q - p\|}{\|r - q\|^2}
        ~ \leq  ~ \frac{1}{\beta} \left(1 + \frac{1}{\beta}\right)
        ~ \leq  ~ \frac{2}{\beta}. \label{e:bregman-inner-2}
\end{equation}
We would like to express the right-hand side of Eq.~\eqref{e:bregman-inner-1} in terms of $p$ rather than $r$. By the $\tau$-admissibility of $D$ and the fact that $r, p \in B_w$, we can apply Lemma~\ref{l:admiss-util}(i) (with the distance function $f_q(\cdot) = D(\cdot, q)$ and $\kappa = \beta/\tau$) to obtain $D(r, q) \leq D(p, q) / (1 - \tau/\beta)$. Combining Eq.~\eqref{e:bregman-inner-2} with this, we obtain
\[
    D(q, p) - D(q, p')
        ~ \leq  ~  \frac{2 \tau}{\beta} D(r, q)
        ~ \leq  ~  \frac{2 \tau}{\beta (1 - \tau/\beta)} D(p, q).
\]
In Lemma~\ref{l:bregman-comparisons}(iii) we showed that any $(1 + \mu)$-admissible Bregman divergence is $\mu$-asymmetric, and by setting $\mu = \tau - 1$ it follows that $D(p, q) \le (\tau - 1) D(q, p)$. Putting this all together, we obtain
\[
    D(q, p) - D(q, p')
        ~ \leq  ~  \frac{2 \tau (\tau - 1)}{\beta (1 - \tau/\beta)} D(q, p).
\]
All that remains is to set $\beta$ sufficiently large to obtain the desired result. Since $\tau \geq 1$ and $\eps \leq 1$, it is easily verified that setting $\beta = 4 \tau^2/\eps$ suffices to produce the desired conclusion.
\end{proof}


\pdfbookmark[1]{References}{s:ref}

\end{document}